\newtheorem{theorem}{Theorem}
\newtheorem{definition}{Definition}
\newtheorem{example}{Example}
\newtheorem{lemma}[theorem]{Lemma}
\newtheorem{corollary}[theorem]{Corollary}
\theoremstyle{remark}
\newtheorem*{remark}{Remark}
\newcommand{\defeq}{\overset{\text{def}}{=}}
\newcommand{\qedwhite}{\hfill \ensuremath{\Box}}
\begin{document}

\title{Bits Through Queues With Feedback}

\author{Laure Aptel and Aslan Tchamkerten \\
Telecom ParisTech}

\maketitle

\begin{abstract}
In their $1996$ paper Anantharam and Verd\'u showed that feedback does not increase the capacity of a queue when the service time is exponentially distributed. Whether this conclusion holds for general service times has remained an open question which this paper addresses.
  
Two main results are established for both the discrete-time and the continuous-time models. First, a sufficient condition on the service distribution for feedback to increase capacity under FIFO service policy. Underlying this condition is a notion of weak feedback wherein instead of  the queue departure times the transmitter is informed about the instants when packets start to be served.  
Second, a condition in terms of output entropy rate under which feedback does not increase capacity. This  condition is general in that it depends on the output entropy rate of the queue but explicitly depends neither on the queue policy nor on the service time distribution. This condition is satisfied, for instance, by queues with LCFS service policies and bounded service times.

\end{abstract}

\section{Introduction}
In \cite{bits_through_queues} Anantharam and Verd\'{u} investigated a class of timing channels modeled as single-server queues. Under a First In First Out (FIFO) service policy they showed that, for a fixed queue output rate $\lambda$, the capacity $C(\lambda)$ and the feedback capacity $C_F(\lambda)$  satisfy 
$$C(\lambda) \leq \sup_{W \geq 0 \atop \mathbb{E}[W] \leq \frac{1}{\lambda} - \frac{1}{\mu}} I(W ; W+S) = C_F(\lambda)\qquad \lambda<\mu$$
where  $S$ denotes the random service time of given rate $\mu$. 
They also showed that when the service distribution is exponential the above inequality becomes an equality and therefore revealing the transmitter the queue departure times through feedback does not help. Whether this negative result holds for other combinations of service policy and service distribution ever since has remained an open issue. 

In this paper we investigate single-server queues for both the discrete-time and the continuous-time models and provide sufficient conditions for feedback to increase capacity and for feedback not to increase capacity.\footnote{Throughout the paper we only consider single-server queues.} A central piece of our investigation is  a notion of weak feedback which describes the situation where the transmitter is causally revealed the instants when packets start getting served---as opposed to the departure times under regular feedback. 

\subsection*{Main contributions}
Let $C_{WF}$ and $C_{WF}(\lambda)$ denote the capacity and the capacity at fixed output rate $\lambda$, respectively, when weak feedback is available---this notion will be made precise in Section~\ref{inccap}.
\begin{itemize}
\item Theorem~\ref{3b}:  Weak feedback represents an intermediate stage between feedback and no feedback, that is
 $$C(\lambda) \leq C_{WF}(\lambda) \leq C_F(\lambda)\qquad \lambda<\mu.$$
 \item Theorem~\ref{wfc}:  An upper bound to the weak feedback capacity is
$$C(\lambda)\leq \lambda\sup_{X : \mathbb{E}[W(X)] \atop W(X)= (X-S_1)^+} I(W(X) ; W(X) + S_2)\qquad \lambda <\mu$$
where $(X,S_1,S_2)$ are jointly independent and where $S_1$ and $S_2$ follow the service time distribution.
\item Theorem~\ref{suffcond}: If the inequality $$\sup_{X:\mathbb{E}[W(X)]\leq\frac{1}{\lambda}-\frac{1}{\mu}\atop W(X)=  (X-S_1)^+}H((X-S_1)^++S_2) < \sup_{W \geq 0 \atop \mathbb{E}[W]\leq\frac{1}{\lambda}-\frac{1}{\mu}} H(W+S)$$
is satisfied for any $\lambda<\mu$ (here $(X,S_1,S_2)$ are as above and $W$ is independent of $S$), then $$C \leq C_{WF} < C_F$$
and therefore feedback increases capacity. Examples of services times that satisfy the above inequality include  a Bernoulli($1/2$) service time in the discrete-time model and a uniformly distributed service time in the continuous-time model.
\item Corollary~\ref{th:Gal}: 
Under Last Come First Served policy and bounded service times
$$C(\lambda)=C_{WF}(\lambda)=C_F(\lambda)\qquad \lambda<\mu$$ and therefore feedback does not help.
\end{itemize}

\subsection*{Related works}
In \cite{discrete_bits_through_queues} the authors investigated the discrete-time version of Anantharam and Verd\'u's model and also investigated the variant where the server can simultaneously serve multiple packets. In \cite{timing_capacity_of_discrete_queues} the entropy increasing property of queueing systems was investigated. In particular, this work established sufficient conditions on queues under which the output entropy is greater than the input entropy. One of these conditions is key to derive Corollary~\ref{th:Gal} above. 

The single-server queue with exponential service time is arguably the most well-studied timing channel. Beyond capacity, robustness with respect to service time noise, sequential decoding, and cutoff rate were investigated  in \cite{robust_decode} and \cite{algo_decode}. The reliability function was investigated in \cite{algo_code} and \cite{WagAn}.

Several variations of the single-server queue channel have been considered. For instance, in \cite{algo_multiserver} the authors investigated multi-server queues and in \cite{bufferless} the authors investigated a queue without buffer wherein arriving packets are dropped if the queue is already serving a packet. 

Secure and covert communication over timing channels was explored in a number of papers, {\it{e.g.}}, \cite{secure_btq}, \cite{covert_btq}, \cite{game_theory_covert}, and \cite{biswas2017survey} for a comprehensive survey. A recent application of timing channels in the context of energy harvesting systems can be found in  \cite{harvesting_channel}. 

The paper is organized as follows. In Section~\ref{back} we briefly review the model in \cite{bits_through_queues}. In Section~\ref{risul} we present our results and prove them in Section~\ref{proofs}. Section~\ref{concrem} concludes the paper.

\section{Background}\label{back}
\begin{figure}
\centering
\begin{picture}(280,60)
	\put(60,0){\includegraphics[scale=1]{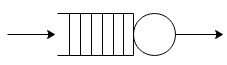}}

	\put(120,46){Queue}

	\put(77,30){$a_i$}
	\put(205,30){$d_i$}
\end{picture}
	\label{queue}
	\caption{Timing channel modeled as a single-server queue.}
\end{figure}

The channel, depicted in Fig.~\ref{queue}, models timing information transmission through a queue.  The $i$\/th packet arrives at the queue at time $a_i$ and departs the queue at time $d_i$. The channel captures timing uncertainty by means of a single-server queue with random service time $S_i$, {\it{i.e.}} $$d_i = a_i + \Delta_i +S_i$$ where $\Delta_i$ denotes the time spent by the $i$\/th packet waiting in the queue (if the queue is empty when the $i$\/th packet arrives then it is immediately served and $\Delta_i = 0$ ). The $S_i$'s are assumed to be independant and identically distributed with mean $1/\mu$, where $\mu$ denotes the service rate of the queue.  The $\Delta_i$'s depend on the service discipline (the choice of the next packet to be served in the queue). Throughout this paper we suppose that $\mu>0$ and unless stated otherwise we assume FIFO service policy.

When feedback is available (see~Fig.~\ref{picture with feedback}) the sender is causally revealed the departure times and can thus choose $a_i$ as a function of both the message to be transmitted and past departure times. 
\begin{definition}[Code]
An $(n,M,T, \epsilon)$-code for a timing channel used without feedback consists of  
\begin{itemize}
\item $M$ codewords such that each codeword is a vector of $n$ arrival instants $a^n$ (with $a_1 \leq a_2 \leq \cdots \leq a_n$) and such that the $n$\/th departure from the queue occurs on average (over equiprobable codewords and the queue distribution) no later than $T$;  
\item a decoder that upon observing $n$ departures $d^n$ produces an estimate $\hat{u}^n=\hat{u}(d^n)$ of $u$ with an average error probability (over equiprobable codewords and the queue distribution) satisfying
 $$\mathbb{P}(\hat{u} \neq u) \leq \epsilon. $$
\end{itemize}
The rate of an $(n, M, T, \epsilon)$-code is defined as\footnote{Throughout the paper logarithms are to the base two.} $$\frac{\log M}{T} \quad \text{bits/second}.$$

If feedback is available then $a_i$ may, in addition to the message to be transmitted, also causally depend on the departure times in the sense that, given $d^n$ event $\{a_i=t\}$ may depend on $\{d_k: d_k<t\}$ but must be independent of $\{d_k: d_k\geq t\}$. 
\end{definition}

\begin{figure}
\centering
\begin{picture}(400,80)
	\put(17,27){\includegraphics[scale=0.87]{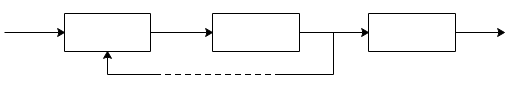}}
	\put(66,59){\large{Encoder}}
	\put(168,59){\large{Queue}}
	\put(267,59){\large{Decoder}}

	\put(-5,69){$u\in\{1,...,M\}$}
	\put(116,69){$a^n(u,d^n)$}
	\put(232,69){$d^n$}
	\put(328,69){$\hat{u}$}
\put(140,38){causally}
\end{picture}
\vspace{-1.2cm}
	\caption{A queue with feedback. A message $u$ is encoded into arrival times $a^n(u)$ with the additionnal information of the causally received $d_k$'s. Given departure times $d^n$ the decoder produces a message estimate $\hat{u}$. }
	\label{picture with feedback}
\end{figure}

\begin{definition}[Capacity]
The capacity $C$(resp. $C_F$) of a timing channel used without feedbback (resp. with feedback) is the highest R for which for all $\gamma > 0$ there exists a sequence of $(n, M, T, \epsilon_T)$-codes (resp. feedback codes) such that 
$$\frac{\log M}{T} > R -  \gamma$$
and $\epsilon_T \to 0$. 
\end{definition}
\begin{definition}[Capacity at fixed output rate]
For any $\lambda$, the capacity $C(\lambda)$ (resp. $C_F(\lambda$)) of a timing channel used without feedback (resp. with feedback) at output rate $\lambda$ is the highest R for which for all $\gamma > 0$ there exists a sequence of (n, M, $n/\lambda$, $\epsilon$)-codes (resp. feedback codes) such that 
$$\lambda \frac{\log M}{n} > R -  \gamma$$
and $\epsilon \to 0$. 
\end{definition}
We clearly have 
$$C\leq C_F$$
and $$C(\lambda) \leq C_F(\lambda)\quad \lambda >0.$$

\begin{theorem}[\cite{bits_through_queues}]
We have
 $$C(\lambda) \leq C_F(\lambda)\quad \lambda <\mu$$
 and
$$C = \sup_{\lambda < \mu}C(\lambda) \leq  C_F = \sup_{\lambda < \mu}C_F(\lambda).$$
\end{theorem}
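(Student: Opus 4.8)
The plan is to establish the two displayed claims essentially as bookkeeping on top of the definitions, since the theorem as stated (attributed to \cite{bits_through_queues}) bundles together a trivial monotonicity fact with a reduction of the unconstrained capacity to a supremum over output rates. First I would note the easy half: any $(n,M,T,\epsilon)$-code without feedback is in particular an $(n,M,T,\epsilon)$-code with feedback, namely one whose encoding functions $a_i(u,d^n)$ ignore the argument $d^n$ and are constant in it. Hence every achievable rate without feedback is achievable with feedback, which gives $C(\lambda)\le C_F(\lambda)$ for every $\lambda$, and the same argument at the level of the unconstrained definition gives $C\le C_F$. The constraint $\lambda<\mu$ in the first display is not needed for the inequality itself; it is the regime in which both quantities are finite and the comparison is meaningful (for $\lambda\ge\mu$ the queue is unstable and the expected completion time $T\approx n/\lambda$ is not sustainable), so I would simply carry that hypothesis along.

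Next I would prove the identities $C=\sup_{\lambda<\mu}C(\lambda)$ and $C_F=\sup_{\lambda<\mu}C_F(\lambda)$. The inequality $C\ge\sup_{\lambda<\mu}C(\lambda)$ is immediate: a sequence of $(n,M,n/\lambda,\epsilon)$-codes achieving rate $R-\gamma$ at output rate $\lambda$ is a sequence of $(n,M,T,\epsilon)$-codes with $T=n/\lambda$ achieving $\frac{\log M}{T}=\lambda\frac{\log M}{n}>R-\gamma$, so it witnesses achievability of $R$ for the unconstrained capacity. For the reverse inequality $C\le\sup_{\lambda<\mu}C(\lambda)$, take any $(n,M,T,\epsilon)$-code; since the empirical output rate of the code is $n/T$, and since (by the stability argument for single-server queues, i.e.\ the $n$th departure cannot on average occur before $n/\mu$, so $T\ge n/\mu$, hence $n/T\le\mu$) the quantity $\lambda\defeq n/T$ lies in $[0,\mu]$, the code is a $(n,M,n/\lambda,\epsilon)$-code at output rate $\lambda$ with $\lambda\frac{\log M}{n}=\frac{\log M}{T}$. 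One must handle the boundary: if along the sequence $n/T\to\mu$ one uses a diagonal/continuity argument, passing to a subsequence whose output rates converge to some $\lambda^\star\le\mu$ and noting that rates at output rate $\lambda^\star-\delta$ approximate those at $\lambda^\star$; slightly more carefully, $R-\gamma<\frac{\log M}{T}=\lambda\frac{\log M}{n}\le C(\lambda)+o(1)\le\sup_{\lambda<\mu}C(\lambda)$ once $\lambda<\mu$ strictly, and the $\lambda=\mu$ case contributes nothing since $T\ge n/\mu$ forces the rate expression to be controlled by the $\lambda\uparrow\mu$ limit. The identical argument applies verbatim with feedback codes in place of codes, giving $C_F=\sup_{\lambda<\mu}C_F(\lambda)$.

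Finally, combining the per-$\lambda$ inequality with the two suprema: $C=\sup_{\lambda<\mu}C(\lambda)\le\sup_{\lambda<\mu}C_F(\lambda)=C_F$, which is the second displayed assertion. The only genuinely delicate point—indeed the one place where the model's structure (rather than pure definition-chasing) enters—is the claim that a code's output rate $n/T$ cannot exceed $\mu$, i.e.\ that $\mathbb{E}[d_n]\ge n/\mu$ regardless of the scheduling discipline or the arrival process. I would prove this by observing that in a single-server queue the server works on at most one packet at a time, so the total service effort expended by time $d_n$ is at least $\sum_{i=1}^n S_i$ minus any server-idle periods, and more simply that $d_n\ge a_n+S_n$ and, by an induction comparing $d_n$ to the time the server would need to clear $n$ jobs back-to-back, $d_n\ge \sum_{i=1}^n S_i$ in the worst case of no idling; taking expectations gives $\mathbb{E}[d_n]\ge n/\mu$ and hence $T\ge n/\mu$, i.e.\ $\lambda=n/T\le\mu$. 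This is the main obstacle only in the sense of requiring care with the queueing dynamics; everything else is a direct unwinding of the code and capacity definitions.
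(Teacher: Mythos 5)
Your argument is correct and is essentially the standard one from Anantharam and Verd\'u; the paper itself states this theorem as a citation to \cite{bits_through_queues} and offers no proof of its own, so there is nothing in the text to diverge from. You correctly isolate the one nontrivial ingredient --- that $\mathbb{E}[d_n]\ge\sum_{i=1}^n\mathbb{E}[S_i]=n/\mu$ forces any admissible code's output rate $n/T$ below $\mu$, so the supremum over $\lambda<\mu$ loses nothing --- and your sketch of the continuity/diagonal step as $\lambda\uparrow\mu$ is the right way to close the boundary case.
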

When feedback is available capacity admits an explicit expression:  
\begin{theorem}[see (3.4) in  \cite{bits_through_queues}]\label{btqf}
We have
 $$ C_F(\lambda) = \lambda \sup_{W \geq 0   \atop   \mathbb{E}[W] \leq \frac{1}{\lambda} - \frac{1}{\mu}} I(W ; W + S)\qquad \lambda<\mu.$$
\end{theorem}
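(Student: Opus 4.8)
The plan is to use the feedback link to turn the FIFO queue into a memoryless additive-noise channel and then apply the channel coding theorem with an input-cost constraint; this is the route of \cite{bits_through_queues}. Fix $\lambda<\mu$, write $c\defeq\frac1\lambda-\frac1\mu>0$, and set $g(\alpha)\defeq\sup_{W\geq 0,\ \mathbb{E}[W]\leq\alpha}I(W;W+S)$, so the claim is $C_F(\lambda)=\lambda\,g(c)$. Two preliminary facts are needed. First, $g$ is finite: a nonnegative variable of mean at most $\frac1\lambda$ has differential (resp.\ discrete) entropy at most that of the matching exponential (resp.\ geometric) law, and $S$ has finite entropy. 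Second, $g$ is concave, hence continuous on the interior of its domain, because for the fixed channel $W\mapsto W+S$ the map $P_W\mapsto I(W;W+S)$ is concave, so mixing near-optimal inputs for $\alpha_1,\alpha_2$ yields a feasible input for $\theta\alpha_1+(1-\theta)\alpha_2$ whose mutual information is at least the corresponding average.

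For the converse, start from an arbitrary $(n,M,n/\lambda,\epsilon)$-feedback code. With the clock started at $0$, under FIFO the departures obey $d_i=d_{i-1}+V_i+S_i$ with $d_0\defeq 0$ and $V_i\defeq(a_i-d_{i-1})^+\geq 0$, so $d_n=\sum_{i=1}^n(V_i+S_i)$ and the delay constraint $\mathbb{E}[d_n]\leq n/\lambda$ becomes $\frac1n\sum_i\mathbb{E}[V_i]\leq c$. Causality of the feedback makes $a_i$, hence $V_i$, independent of $S_i$ given $(U,d^{i-1})$, so $H(d_i\mid d^{i-1},U)\geq H(S)$ and thus $H(d^n\mid U)\geq nH(S)$. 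On the other hand $H(d^n)=\sum_i H(d_i-d_{i-1}\mid d^{i-1})\leq\sum_i H(V_i+S_i)$, and since $V_i$ is independent of $S_i$ marginally, $H(V_i+S_i)-H(S)=I(\widetilde W_i;\widetilde W_i+S)\leq g(\mathbb{E}[V_i])$, where $\widetilde W_i$ is a copy of $V_i$ independent of a fresh $S$. Hence $I(U;d^n)=H(d^n)-H(d^n\mid U)\leq\sum_i g(\mathbb{E}[V_i])\leq n\,g\!\left(\tfrac1n\sum_i\mathbb{E}[V_i]\right)\leq n\,g(c)$ by concavity of $g$ and Jensen's inequality. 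With Fano's inequality, $(1-\epsilon)\log M\leq I(U;d^n)+1\leq n\,g(c)+1$, so $\lambda\frac{\log M}{n}\leq\frac{\lambda}{1-\epsilon}\bigl(g(c)+\tfrac1n\bigr)$, and letting $n\to\infty$, $\epsilon\to 0$ gives $C_F(\lambda)\leq\lambda\,g(c)$.

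For achievability, fix $\epsilon'>0$ and choose $W^\star$ with $\mathbb{E}[W^\star]=\beta<c$ and $I(W^\star;W^\star+S)\geq g(c)-\epsilon'$, which is possible by continuity of $g$ at $c$. Apply the channel coding theorem with average input-cost constraint to the memoryless channel $X\mapsto X+S$ with cost $c(x)=x$: for all large $n$ there is a code of $2^{nR}$ length-$n$ codewords with $R>I(W^\star;W^\star+S)-\epsilon'$, average codeword cost at most $\tfrac12(\beta+c)<c$, and vanishing error probability. Realize it over the queue with feedback by letting the $i$-th arrival be $a_i=d_{i-1}+x_i$, which is admissible precisely because $d_{i-1}$ is causally available; then $d_i-d_{i-1}=x_i+S_i$, so observing $d^n$ is equivalent to observing the channel outputs and the decoder inherits the channel code's error probability. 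Since the codeword symbols do not depend on the service times, $\mathbb{E}[d_n]=n\cdot(\text{average codeword cost})+n/\mu<n/\lambda$, so this is a valid $(n,M,n/\lambda,\epsilon_n)$-feedback code; letting $\epsilon'\to 0$ and using continuity of $g$ yields $C_F(\lambda)\geq\lambda\,g(c)$.

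The step I expect to be the main obstacle is making the converse fully rigorous in the continuous-time model: verifying that all the differential entropies appearing are finite and well defined, checking carefully that the causality built into the feedback code really gives $H(d^n\mid U)\geq nH(S)$ with $V_i\perp S_i$ in the sense used above, and confirming that $g$ is finite, concave, and continuous up to the endpoint $c$ so that the single-letterization and the limit in the achievability argument are legitimate. The achievability side is comparatively routine once the reduction to an additive-noise channel is in place, the only care being to keep the average codeword cost strictly below $c$ so that the code meets the $\mathbb{E}[d_n]\leq n/\lambda$ deadline exactly.
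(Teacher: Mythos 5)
Your proof is correct and follows essentially the argument of Anantharam and Verd\'u that this theorem is quoted from: the paper itself does not reprove the statement but cites it as (3.4) of \cite{bits_through_queues}, where the converse proceeds exactly via the idle-time decomposition $d_i=\max(a_i,d_{i-1})+S_i$, the identity $H(V_i+S_i)-H(S)=I(V_i;V_i+S_i)$, and concavity of the constrained mutual information, and achievability via $a_i=d_{i-1}+x_i$ reducing the queue to the memoryless channel $x\mapsto x+S$. The technical caveats you flag (finiteness of the differential entropies, continuity of $g$ at the endpoint) are the right ones and are handled in the cited source.
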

Note that the above two theorems hold whether time is continuous or discrete. Throughout this paper results hold in both settings unless stated otherwise.

\section{Results}\label{risul}
This section is split into two parts. Section~\ref{inccap} focuses on a sufficient condition under which feedback increases capacity and Section~\ref{noninccap} focuses on a sufficient condition under which feedback does not increase capacity.
\subsection{Weak feedback} \label{inccap}
\begin{figure*}[t!]
\centering
\begin{picture}(400,100)
	\put(20,20){\includegraphics[scale=0.8]{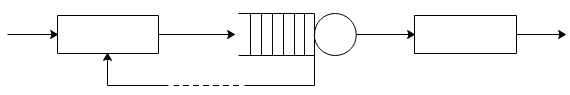}}

	\put(65,52){\large{Encoder}}
	\put(280,52){\large{Decoder}}

	\put(120,63){$a^n(u, b^n)$}
	\put(213,30){$b^n$}
	\put(248,63){$d^n$}

	\put(-10,63){$u\in\{1,...,M\}$}
	\put(343,63){$\hat{u}$}

	\put(126,30){causally}
\end{picture}
	\caption{A queue with weak feedback. A message $u$ is encoded into arrival times $a^n(u)$ with the additionnal information of the causally received $b_k$'s. Given departure times $d^n$ the decoder produces a message estimate $\hat{u}$.}
	\label{picture with weak feedback}
\end{figure*}

When feedback is available the transmitter causally knows the departure times $$d_i=a_i + \Delta_i +S_i.$$ We now introduce a slightly different notion of feedback. We say that weak feedback is available if the transmitter has causally access to $$b_i \defeq a_i + \Delta_i = d_i - S_i,$$ the time when the $i$\/th packet starts to be served. Note that since we assume a FIFO service discipline we have
\begin{align}\label{bprop}
b_i=\max\{d_{i-1},a_i\}.
\end{align} Indeed, when $a_i\geq d_{i-1}$ the $i$\/th packet is served immediately and therefore $b_i=a_i$. Instead, if $a_i < d_{i-1}$ then $b_i=d_{i-1}$ since the $i$\/th packet will have to wait until the previous packet exits the queue. Therefore, the feedback capacity is at least as large as the weak feedback capacity. Denoting by $C_{WF}$ and $C_{WF}(\lambda)$ the weak feedback capacity and the weak feedback capacity at fixed output rate $\lambda$, respectively, the following theorem follows: 

\begin{theorem}\label{3b} 
We have
$$C(\lambda) \leq C_{WF}(\lambda) \leq C_F(\lambda)\qquad \lambda<\mu.$$
and hence
$$C \leq C_{WF} \leq C_F$$

\end{theorem}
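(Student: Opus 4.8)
The plan is to prove the two inequalities separately by simple code–conversion arguments, establishing that weak feedback is genuinely intermediate: a code without feedback is already a weak feedback code, and a weak feedback code can be emulated by a full feedback code.

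For the left inequality, $C(\lambda)\le C_{WF}(\lambda)$ (and likewise $C\le C_{WF}$), I would observe that an $(n,M,T,\epsilon)$‑code without feedback is in particular a weak feedback code: the encoder simply discards $b^n$ and outputs its fixed codeword $a^n$. The induced joint law of $(a^n,d^n)$, the average delay $T$, and the error probability $\epsilon$ are all unchanged, so every achievable rate without feedback is achievable with weak feedback, in both the fixed–output–rate and the unconstrained formulations.

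For the right inequality, $C_{WF}(\lambda)\le C_F(\lambda)$ (and $C_{WF}\le C_F$), I would show that a full feedback encoder can reproduce the arrival process of any weak feedback encoder, using the FIFO identity \eqref{bprop}, $b_i=\max\{d_{i-1},a_i\}$. Given a weak feedback code, build a full feedback code with the same codebook, delay constraint, and decoder as follows. Proceed recursively: suppose the full feedback encoder has already produced $a_1,\dots,a_{i-1}$; at the instant it must fix $a_i=t$ it knows $\{d_k:d_k<t\}$ from the feedback, and I claim it then also knows every $b_k$ with $b_k<t$. Indeed, if $b_k=a_k$ it knows $b_k$ from its own past decision, while if $b_k=d_{k-1}>a_k$ then $d_{k-1}=b_k<t$ is among the departures it has observed. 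Feeding $\{b_k:b_k<t\}$ into the weak feedback encoding rule to select $a_i$, the full feedback encoder generates exactly the same (queue‑randomized) arrival times, hence the same joint distribution of $(a^n,d^n)$, the same average delay, and the same error probability as the original weak feedback code. This argument is insensitive to whether time is discrete or continuous.

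The main obstacle, and the only delicate point, is the causal bookkeeping in the second step: one must verify that at every instant $t$ the data $\{b_k:b_k<t\}$ demanded by the weak feedback encoder is truly available to the full feedback encoder from $\{d_k:d_k<t\}$ together with its own past choices — and this is precisely what the case analysis via \eqref{bprop} secures, which is also where the FIFO assumption is used in an essential way. Finally, $C\le C_{WF}\le C_F$ follows by applying the same two conversions to general $(n,M,T,\epsilon)$‑codes (or, equivalently, by taking the supremum over $\lambda<\mu$ in the fixed–rate statement together with the elementary identity $C_{WF}=\sup_{\lambda<\mu}C_{WF}(\lambda)$).
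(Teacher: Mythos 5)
Your proposal is correct and follows essentially the same route as the paper: the paper derives the identity \eqref{bprop}, $b_i=\max\{d_{i-1},a_i\}$, immediately before the theorem and concludes from it that a full-feedback encoder can reconstruct the weak feedback (hence $C_{WF}(\lambda)\le C_F(\lambda)$), with $C(\lambda)\le C_{WF}(\lambda)$ being the trivial direction. You merely make explicit the causal bookkeeping that the paper leaves implicit.
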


In \cite{bits_through_queues} it is shown that when feedback is available  encoding strategies may, without loss of optimality, be restricted to those where $$a_i \geq d_{i-1}\qquad i \in \{ 1, \cdots, n\},$$ that is to those where the queue is always empty since a packet arrival never occurs before the previous packet exits the queue. Specifically, for any strategy that allows the queuing of packets the distribution of the vector of departures given the sent message ${\mathbb{P}}_{{d^n}|U}$ is the same as that of a modified strategy where the $i$\/th arrival is not allowed to happen before the $(i-1)$\/th packet starts being served. Hence the joint distribution of the input and the output of the channel ($\mathbb{P}_{{d^n}|U}\mathbb{P}_U$) remains the same and so does the error probability---for any given decoding rule. 

Under weak feedback encoding strategies may be restricted to those that allow the queuing of at most one packet:
\begin{theorem} \label{bpos}
Without loss of optimality, under weak feedback encoding strategies may be restricted to those that satisfy $a_i \geq b_{i-1}$, $i \in \{ 2, \cdots, n\}$. 
\end{theorem}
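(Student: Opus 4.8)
The plan is to mimic the argument sketched for the feedback case (restriction to $a_i\geq d_{i-1}$), but now at the level of weak feedback, showing that any encoder can be replaced by one satisfying $a_i\geq b_{i-1}$ without changing the joint distribution $\mathbb{P}_{d^n\mid U}\mathbb{P}_U$, hence without changing the error probability under any fixed decoder. The key structural fact is \eqref{bprop}: $b_i=\max\{d_{i-1},a_i\}$, equivalently $b_i=\max\{b_{i-1}+S_{i-1},a_i\}$, so the weak-feedback symbol $b_i$ that the encoder observes depends on the past arrival $a_i$ only through $\max\{a_i,b_{i-1}\}$. In other words, if two encoders produce the same $a_i$ whenever $a_i\geq b_{i-1}$, and otherwise one of them ``pushes'' $a_i$ up to exactly $b_{i-1}$, then both see the same weak-feedback sequence $b^n$ and produce the same departures $d^n=b^n+S^n$.

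Concretely, I would proceed by induction on $i$. Given an arbitrary encoding strategy, define a modified strategy $\tilde a_i\defeq\max\{a_i,b_{i-1}\}$ for $i\geq 2$ (and $\tilde a_1=a_1$), where $b_{i-1}$ is the start-of-service time already generated under the modified strategy. First I would check that the modified strategy is a legitimate weak-feedback code: the arrival times are still nondecreasing (since $b_{i-1}\geq b_{i-2}+S_{i-2}\geq \tilde a_{i-1}$ under FIFO, and $\tilde a_i\geq b_{i-1}$), and causality is preserved because $\tilde a_i$ is a function of $a_i$ — itself a causal function of $u$ and $b^{i-1}$ — and of $b_{i-1}$, which is available causally. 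Next, the heart of the argument: I would show by induction that the modified strategy induces exactly the same $b_i$ and $d_i$ as the original one. For the base case $b_1=\max\{d_0,a_1\}=a_1=\tilde a_1$ (with the convention $d_0=-\infty$ or $a_1\geq d_0$ trivially). For the inductive step, under the modified strategy $b_i=\max\{d_{i-1},\tilde a_i\}=\max\{d_{i-1},\max\{a_i,b_{i-1}\}\}=\max\{d_{i-1},a_i\}$ because $b_{i-1}\leq d_{i-1}$; and by the induction hypothesis $d_{i-1}$ equals its original value, so $b_i$ equals its original value, and then $d_i=b_i+S_i$ does too. Since the $S_i$'s are generated independently of the encoder's choices, this coupling shows $\mathbb{P}_{d^n\mid U}$ is unchanged, hence so is $\mathbb{P}(\hat u\neq u)$ for every decoder, and the delay constraint $\mathbb{E}[d_n]\leq T$ is also preserved (indeed $d_n$ is pathwise unchanged).

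The main obstacle I anticipate is purely bookkeeping rather than conceptual: one must be careful that ``$a_i$ as a function of $b^{i-1}$'' is being evaluated along the modified trajectory, so that the substitution $\tilde a_i=\max\{a_i(u,b^{i-1}),b_{i-1}\}$ is well-defined and causal, and one must verify the arrival times of the modified code remain sorted, $\tilde a_1\leq\tilde a_2\leq\cdots\leq\tilde a_n$. Both follow from \eqref{bprop} and the FIFO relation $b_i\geq b_{i-1}+S_{i-1}\geq b_{i-1}$, but they should be spelled out. A minor point worth stating explicitly is that allowing queuing of \emph{more than one} packet is genuinely wasteful here: if $a_i<b_{i-1}$ then the $i$th packet waits for a packet that has not even begun service, and the encoder already knows $b_{i-1}$, so advancing $a_i$ to $b_{i-1}$ costs nothing in information and (weakly) improves the delay. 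This is exactly the content of the reduction, and once the coupling above is in place the theorem follows.
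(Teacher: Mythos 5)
Your proposal is correct and follows essentially the same route as the paper: both replace $a_i$ by $\tilde a_i=\max\{a_i,b_{i-1}\}$ and use $b_{i-1}\leq d_{i-1}$ to conclude $\max\{\tilde a_i,d_{i-1}\}=\max\{a_i,d_{i-1}\}$, so the conditional law of $d^n$ given $u$ (hence the error probability and the delay constraint) is unchanged; the paper phrases this through the factorization $\mathbb{P}(d^n\mid u)=\prod_i\mathbb{P}(d_i\mid u,d^{i-1})$ while you use the equivalent pathwise coupling with the same $S^n$. One nitpick: in your monotonicity check the intermediate inequality $b_{i-2}+S_{i-2}\geq\tilde a_{i-1}$ can fail when $a_{i-1}>d_{i-2}$; the correct chain is $b_{i-1}=\max\{d_{i-2},a_{i-1}\}\geq\max\{b_{i-2},a_{i-1}\}=\tilde a_{i-1}$.
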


The following theorem provides a multi-letter upper bound on the weak feedback capacity:
\begin{theorem}\label{wfcapacity}We have
$$C_{WF}(\lambda) \leq \lambda \left(  \left( \lim_{n \to \infty} \sup_{X^n \geq 0 \atop \frac1n \sum_{i=1}^n\mathbb{E}[D_i] \leq \frac{1}{\lambda} } \frac1n H(D^n) \right) - H(S) \right) \qquad \lambda<\mu$$
where $$ D_i \defeq  (X_i-S_{i-1})^+ + S_i$$ and where $X^n$ is independent of $S^n$ (i.i.d. service times).
\end{theorem}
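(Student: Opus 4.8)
The natural route is the usual information-theoretic converse: a Fano bound combined with the structural reduction of Theorem~\ref{bpos}, which turns the departure process into an explicit recursion driven by the input and the service times.

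First I would fix a sequence of $(n,M,n/\lambda,\epsilon)$ weak-feedback codes with $\epsilon\to 0$ and $\lambda\log M/n>R-\gamma$ (here $\lambda<\mu$), and invoke Theorem~\ref{bpos} to assume $a_i\ge b_{i-1}$; set $X_i=a_i-b_{i-1}\ge 0$, with the conventions $b_0=0$, $S_0=0$, so $X_1=a_1$. Combining $a_i\ge b_{i-1}$ with \eqref{bprop} gives $b_i=b_{i-1}+\max\{S_{i-1},X_i\}$, hence $d_i=b_i+S_i$, hence the inter-departure times $D_i\defeq d_i-d_{i-1}$ satisfy $D_i=(X_i-S_{i-1})^{+}+S_i$ (the $i=1$ term being a boundary effect that vanishes after dividing by $n$). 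Since $d_i=\sum_{j\le i}D_j$, the map $d^n\leftrightarrow D^n$ is a bijection, so $H(d^n)=H(D^n)$, and the deadline constraint becomes $\sum_i\mathbb{E}[D_i]=\mathbb{E}[d_n]\le n/\lambda$. Unrolling \eqref{bprop} also shows that $b^{i-1}$, hence $X_i$, is a deterministic function of $(U,d^{i-2})$, equivalently of $(U,S^{i-2})$; in particular $X_i$ is independent of $S_{i-1},S_i,\dots$, and given $(U,d^{i-1})$ both $S^{i-1}$ and $X_i$ are determined.

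Next, Fano gives $\log M\le\bigl(I(U;d^n)+1\bigr)/(1-\epsilon)$, and I would evaluate $I(U;d^n)=H(d^n)-H(d^n|U)$ exactly. By the chain rule $H(d^n|U)=\sum_i H(d_i|d^{i-1},U)=\sum_i H(S_i|d^{i-1},U)=nH(S)$: given $(d^{i-1},U)$, the identity $d_i=d_{i-1}+(X_i-S_{i-1})^{+}+S_i$ together with the fact that $d_{i-1}$, $X_i$ and $S_{i-1}$ are then all determined makes $d_i$ a constant plus $S_i$, and $S_i$ is independent of $(U,S^{i-1})$, hence of $(d^{i-1},U)$. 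Therefore $I(U;d^n)=H(D^n)-nH(S)$, and
$$\lambda\,\frac{\log M}{n}\le\frac{\lambda}{1-\epsilon}\Bigl(\tfrac1n H(D^n)-H(S)+\tfrac1n\Bigr).$$
It then remains to show that $\tfrac1n H(D^n)$ is at most the $n$-th term of the supremum in the theorem, i.e.\ that letting $X_i$ depend on the observed $b^{i-1}$ cannot raise the joint output entropy above what strategies with $X^n$ independent of $S^n$ achieve at the same average-delay budget. I would prove this by coupling: replace the adapted $X^n$ by a copy $\widetilde X^n$ with the same joint distribution but independent of $S^n$ (so $\widetilde X^n\ge 0$), and set $\widetilde D_i=(\widetilde X_i-S_{i-1})^{+}+S_i$. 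Since $X_i\perp S_{i-1}$ already and $\widetilde X_i\perp S_{i-1}$ by construction, $\mathbb{E}[(\widetilde X_i-S_{i-1})^{+}]$ and $\mathbb{E}[(X_i-S_{i-1})^{+}]$ depend only on the common marginal of $X_i$ and of $S$, so $\mathbb{E}[\widetilde D_i]=\mathbb{E}[D_i]$ and $(\widetilde X^n,S^n)$ is admissible in the supremum; the claim then reduces to $H(\widetilde D^n)\ge H(D^n)$. Granting this, letting $n\to\infty$ and then $\gamma\downarrow 0$ yields the theorem, the existence of the defining limit following from a superadditivity argument applied to $n\mapsto\sup_{X^n\perp S^n}H(D^n)$ (or one simply works with the $\limsup$).

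The technical heart — and the step that genuinely uses the weak-feedback structure rather than just FIFO — is the inequality $H(\widetilde D^n)\ge H(D^n)$. The intuition is that making $X_i$ depend on the past service times never lets it anticipate the fresh $S_{i-1},S_i$ entering $D_i$ and only injects extra correlations among the $D_i$'s, so it can only lower the joint entropy; the clean way to make this precise is a term-by-term chain-rule comparison of $H(\widetilde D_i\mid\widetilde D^{i-1})$ with $H(D_i\mid D^{i-1})$, exploiting that in the coupled model $S^{i-1}$ cannot be recovered from $\widetilde D^{i-1}$ alone whereas in the original model $(d^{i-1},U)$ determines it. Note that in the full-feedback setting underlying Theorem~\ref{btqf} this step is unnecessary — there one bounds $H(D^n)\le\sum_i H(D_i)$ and single-letterizes directly — which is why the present bound is intrinsically multi-letter. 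A minor loose end is the $i=1$ convention for $D_1$, which affects the bound only by $O(1/n)$.
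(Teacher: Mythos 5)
Your proposal follows essentially the same route as the paper's proof: Fano plus data processing, the chain rule on $H(D^n)$, and the identity $H(d^n\mid U)=nH(S)$. The paper reaches the same intermediate point $I(U;V)\le H(D^n)-nH(S)$ slightly differently, by writing $I(U;D_i\mid D^{i-1})\le I(W_i;D_i\mid D^{i-1})$ with $W_i\defeq D_i-S_i$ and using $H(D_i\mid W_i,D^{i-1})=H(S)$; this bypasses any explicit appeal to Theorem~\ref{bpos}, since $W_i=(a_i-d_{i-1})^+=(X_i-S_{i-1})^+$ with $X_i=(a_i-b_{i-1})^+$ covers the case $a_i<b_{i-1}$ automatically. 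Your exact evaluation of $H(d^n\mid U)$ via the determinacy of $(X_i,S^{i-1})$ from $(U,d^{i-1})$ is correct and is the same computation the paper carries out in the proof of Theorem~\ref{bpos}.

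The one place you and the paper part ways is the step you yourself flag as the technical heart: passing from the $H(D^n)$ generated by an \emph{adapted} input (your $X_i$ is a function of $(U,S^{i-2})$ through $b^{i-1}$) to the supremum over $X^n$ \emph{independent} of $S^n$. The paper performs this step as a single unannotated inequality; you replace it by the coupling claim $H(\widetilde D^n)\ge H(D^n)$, which you do not prove and which does not follow from matching marginals alone --- two joint laws of $D^n$ with identical one-dimensional marginals are not in general entropy-ordered, so it is not a priori excluded that adapting $X_i$ to $S^{i-2}$ raises the joint output entropy. As written, then, your argument leaves open exactly the step the paper also does not justify; you are simply more explicit about it. Two mitigating remarks: (i) the step is not needed downstream, because Theorem~\ref{wfc} only uses $H(D^n)\le\sum_i H(D_i)$ together with the marginal independence $X_i\perp(S_{i-1},S_i)$, both of which your setup establishes; (ii) if the supremum in the statement is enlarged to inputs of the form $X_i=f_i(U,S^{i-2})$, both your proof and the paper's close immediately. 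Your remaining housekeeping (the bijection $d^n\leftrightarrow D^n$, the $i=1$ boundary term, $\mathbb{E}[\widetilde D_i]=\mathbb{E}[D_i]$, superadditivity for the limit) is fine.
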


The next theorem provides a single-letter upper bound to the weak feedback capacity:
\begin{theorem}\label{wfc}
We have
\begin{align*}
 C_{WF}(\lambda) \leq \lambda \sup_{X:\mathbb{E}(W(X))\leq1/\lambda-1/\mu  \atop W(X) \defeq (X-S_1)^+ } I(W(X) ; W(X)+S_2)\qquad \lambda<\mu
\end{align*}
where $S_1$ and $S_2$ are two independent random variables following the service time distribution and where $X$ is an arbitrary real random variable that is independent of $(S_1,S_2)$.
\end{theorem}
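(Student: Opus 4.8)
The plan is to single-letterize the multi-letter upper bound of Theorem~\ref{wfcapacity} by means of a time-sharing (auxiliary random variable) argument combined with the sub-additivity of entropy. Fix $\lambda<\mu$, an integer $n$, and a feasible input $X^n\geq 0$ (independent of the i.i.d.\ service times $S^n$) with $\frac1n\sum_{i=1}^n\mathbb{E}[D_i]\leq 1/\lambda$, where $D_i=(X_i-S_{i-1})^++S_i$ and $W_i\defeq(X_i-S_{i-1})^+$. Since $\mathbb{E}[D_i]=\mathbb{E}[W_i]+1/\mu$, the rate constraint reads $\frac1n\sum_{i=1}^n\mathbb{E}[W_i]\leq 1/\lambda-1/\mu$. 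First I would apply the chain rule and ``conditioning reduces entropy'' to get $H(D^n)=\sum_{i=1}^n H(D_i\mid D^{i-1})\leq\sum_{i=1}^n H(D_i)=\sum_{i=1}^n H(W_i+S_i)$, noting that each term (for $i\geq 2$) has the form $H\big((X_i-S_{i-1})^++S_i\big)$ with $(X_i,S_{i-1},S_i)$ jointly independent and $S_{i-1},S_i$ distributed as $S$.

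Next I would introduce $Q$ uniformly distributed over the time indices and independent of $(X^n,S^n)$, and set $\tilde X\defeq X_Q$, $\tilde S_1\defeq S_{Q-1}$, $\tilde S_2\defeq S_Q$ (the boundary index $i=1$ is separated off; its contribution to $\frac1n H(D^n)$ is $O(\tfrac{\log n}{n})$ once the rate constraint bounds $H(D_1)$, hence vanishes as $n\to\infty$). The crucial point is that $(\tilde X,\tilde S_1,\tilde S_2)$ are jointly independent with $\tilde S_1,\tilde S_2$ distributed as $S$: conditionally on $Q=q$, $X_q$ is independent of $(S_{q-1},S_q)$, which are two distinct i.i.d.\ service times, so the conditional law is a product law whose $(\tilde S_1,\tilde S_2)$-marginal does not depend on $q$; averaging over $Q$ then gives the unconditional joint independence. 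Therefore $\frac1n\sum_i H(W_i+S_i)=H\big((\tilde X-\tilde S_1)^++\tilde S_2\mid Q\big)\leq H\big((\tilde X-\tilde S_1)^++\tilde S_2\big)$, while $\mathbb{E}\big[(\tilde X-\tilde S_1)^+\big]=\frac1n\sum_i\mathbb{E}[W_i]\leq 1/\lambda-1/\mu$. Taking the supremum over admissible inputs (and noting that allowing an arbitrary real $X$ only enlarges the feasible set) yields, for every $n$, $\frac1n H(D^n)\leq\sup_{X:\,\mathbb{E}[W(X)]\leq 1/\lambda-1/\mu}H(W(X)+S_2)$ with $W(X)=(X-S_1)^+$, and hence the same bound for the $n\to\infty$ limit appearing in Theorem~\ref{wfcapacity}.

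It remains to rewrite the entropy as a mutual information: since $S_2$ is independent of $W(X)$, we have $H(W(X)+S_2\mid W(X))=H(S_2)=H(S)$, so $H(W(X)+S_2)-H(S)=I(W(X);W(X)+S_2)$. Substituting into Theorem~\ref{wfcapacity} gives $C_{WF}(\lambda)\leq\lambda\big(\sup_X H(W(X)+S_2)-H(S)\big)=\lambda\sup_{X:\,\mathbb{E}[W(X)]\leq 1/\lambda-1/\mu}I(W(X);W(X)+S_2)$, which is exactly the claim.

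The step I expect to be the main obstacle is the clean verification of the joint-independence structure of $(\tilde X,\tilde S_1,\tilde S_2)$ under time-sharing over an index $i$ — in particular that $\tilde S_1=S_{Q-1}$ does not become correlated with $\tilde S_2=S_Q$ or with $\tilde X$ — together with a careful treatment of the boundary index $i=1$ and, in the continuous-time setting, of the differential-entropy manipulations (which are legitimate once one reduces to the case where the relevant entropies are finite, the bound being vacuous otherwise). A slightly different route avoiding the auxiliary variable is to show that $g(t)\defeq\sup_{X:\,\mathbb{E}[W(X)]\leq t}H(W(X)+S_2)$ is non-decreasing and concave — concavity by mixing two near-optimal inputs through an independent Bernoulli switch and using that conditioning reduces entropy — and then to bound $\frac1n\sum_i H(W_i+S_i)\leq\frac1n\sum_i g(\mathbb{E}[W_i])\leq g\big(\frac1n\sum_i\mathbb{E}[W_i]\big)\leq g(1/\lambda-1/\mu)$ by Jensen's inequality; this leads to the same conclusion.
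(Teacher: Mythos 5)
Your argument is correct, and your primary route differs from the paper's in one genuine respect. The paper also starts from Theorem~\ref{wfcapacity} and the bound $\frac1n H(D^n)\leq\frac1n\sum_i H(D_i)=\frac1n\sum_i H(W(X_i)+S_i)$, but it then single-letterizes via a value function: it defines $c(a)=\sup_{X:\mathbb{E}[W(X)]\leq a}I(W(X);W(X)+S_2)$, proves in a separate lemma (Lemma~\ref{concave}) that $c$ is concave and non-decreasing by mixing two near-optimal input laws and invoking concavity of entropy, and concludes by Jensen's inequality --- exactly the ``slightly different route'' you sketch in your last sentences. Your main route instead absorbs the averaging into an auxiliary time-sharing index $Q$ and uses ``conditioning reduces entropy'' on $H\big((\tilde X-\tilde S_1)^++\tilde S_2\mid Q\big)$; the joint-independence check you flag as the main obstacle does go through cleanly (conditionally on $Q=q$ the law of $(X_q,S_{q-1},S_q)$ is the product $P_{X_q}\times P_S\times P_S$, whose $(\tilde S_1,\tilde S_2)$-marginal is independent of $q$, so the mixture over $q$ factorizes as $P_{\tilde X}\times P_S\times P_S$), and your treatment of the boundary index $i=1$ and of the mean constraint $\mathbb{E}[W(\tilde X)]=\frac1n\sum_i\mathbb{E}[W_i]$ is sound. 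What the time-sharing route buys is that it dispenses with proving concavity of the value function; what the paper's route buys is a reusable lemma (the same function $c$ and its concavity are invoked again in the appendix proof of Corollary~\ref{bound}). The two are really the same mixing idea packaged differently, and both yield the stated bound after the final rewriting $H(W(X)+S_2)-H(S)=I(W(X);W(X)+S_2)$.
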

In the feedback capacity expression of Theorem~\ref{btqf} the $W$ can be interpreted as the waiting time of the queue. And because of feedback, the transmitter can control $W$ perfectly. Instead, if weak feedback is available $W$ can no longer be perfectly controlled but only up to a service time. This intuition is reflected by the fact that the transmitter now controls $W$ through $X$ via $W=(X-S_1)^+$.

\begin{corollary}
\label{bound}
We have
\begin{align*}
 C(\lambda) \leq \lambda \sup_{X:\mathbb{E}(W(X))\leq1/\lambda-1/\mu  \atop W(X) =(X-S_1)^+ } I(W(X) ; W(X)+S_2)\qquad \lambda<\mu
\end{align*}
where $S_1$ and $S_2$ are two independent random variables following the service time distribution and where $X$ is an arbitrary real random variable that is independent of $(S_1,S_2)$.
\end{corollary}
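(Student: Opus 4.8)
The statement is an immediate consequence of the two preceding results, so the proof is a two‑step chaining argument. First, by Theorem~\ref{3b} we have $C(\lambda)\le C_{WF}(\lambda)$ for every $\lambda<\mu$: any $(n,M,n/\lambda,\epsilon)$‑code used without feedback is in particular such a code under weak feedback, since the encoder is free to ignore the causally revealed $b_i$'s, so the class of weak‑feedback codes is at least as rich and the weak‑feedback capacity at fixed output rate is at least as large. Second, by Theorem~\ref{wfc} we have $C_{WF}(\lambda)\le \lambda\sup_X\{I(W(X);W(X)+S_2):\mathbb{E}[W(X)]\le 1/\lambda-1/\mu,\ W(X)=(X-S_1)^+\}$, with $(X,S_1,S_2)$ jointly independent and $S_1,S_2$ distributed as the service time. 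Composing the two inequalities gives exactly the asserted bound on $C(\lambda)$.

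\textbf{What needs checking.} Essentially only bookkeeping: one has to verify that the feasible set and the objective appearing in Corollary~\ref{bound} are literally those of Theorem~\ref{wfc} — the normalization factor $\lambda$, the mean‑delay budget $\frac1\lambda-\frac1\mu$, the defining relation $W(X)=(X-S_1)^+$ (the ``$\defeq$'' in Theorem~\ref{wfc} and the ``$=$'' here denote the same constraint), and the mutual‑independence of $(X,S_1,S_2)$ with $S_1,S_2$ having the service‑time law. All of these match verbatim, so no new estimate or limiting argument is required. I do not expect any genuine obstacle here; the content has already been done in Theorems~\ref{wfcapacity} and~\ref{wfc}.

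\textbf{Alternative (not recommended).} If one wished to avoid routing through the weak‑feedback channel, one could re‑derive the bound directly by repeating the proof of Theorems~\ref{wfcapacity}--\ref{wfc} in the degenerate case where the encoder has no causal side information at all: fix the output rate, apply Fano's inequality and the chain rule to bound $\log M$ by $H(d^n)-H(d^n\mid a^n)=H(d^n)-nH(S)$, reduce without loss of optimality to strategies with an empty queue (the reduction of \cite{bits_through_queues} recalled after Theorem~\ref{3b}), single‑letterize, and finally relax the constraint $a_i\ge d_{i-1}$ to the looser coupling $a_i=b_{i-1}+X_i$, $W_i=(X_i-S_{i-1})^+$, which produces the $(X-S_1)^+$ form. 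This is strictly more laborious and yields nothing beyond the two‑line deduction from Theorems~\ref{3b} and~\ref{wfc}, so the latter is the natural route and the one I would write up.
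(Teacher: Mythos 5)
Your proposal is correct and is exactly the paper's own proof: the authors deduce the corollary by combining $C(\lambda)\le C_{WF}(\lambda)$ (Theorem~\ref{3b}) with the bound of Theorem~\ref{wfc}, and they relegate to the appendix precisely the direct argument you sketch as an alternative. Nothing is missing.
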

\begin{proof}[Proof of Corollary~\ref{bound}]
The bound holds since $ C(\lambda) \leq  C_{WF}(\lambda)$. A direct proof of the Corollary that bypasses the inequality $ C(\lambda) \leq  C_{WF}(\lambda)$ is given in the appendix.
\end{proof}

Notice that since for $\lambda<\mu$
\begin{align*}
C_{WF}(\lambda)\leq \lambda \sup_{X:\mathbb{E}(W(X))\leq1/\lambda-1/\mu \atop W(X) \defeq (X-S_1)^+} &I(W(X);W(X)+S_2) \\ & \hspace{-1cm}\leq \lambda \sup_{W \geq 0   \atop   \mathbb{E}[W] \leq \frac{1}{\lambda} - \frac{1}{\mu}} I(W ; W + S)\\
  &\hspace{-1cm} \overset{\text{Th.$2$}}{=} C_F(\lambda),
\end{align*}
Corollary~\ref{bound} provides an upper bound to the capacity without feedback which is at least as good as the feedback capacity bound.

The next theorem provides a (single-letter) sufficient condition under which feedback increases capacity:
\begin{theorem}\label{suffcond}
Let $S$ be a rate $\mu$ service time that satisfies
\begin{multline*}
\sup_{X:\mathbb{E}[W(X)]\leq\frac{1}{\lambda}-\frac{1}{\mu}\atop W(X)=  (X-S_1)^+}H((X-S_1)^++S_2) < \sup_{W \geq 0 \atop \mathbb{E}[W]\leq\frac{1}{\lambda}-\frac{1}{\mu}} H(W+S)
\end{multline*}
where $\lambda>\mu$, where $S_1$ and $S_2$ are independent random variables that follow the service time distribution,  where $X$ is an arbitrary real random variable that is independent of $(S_1,S_2)$, and where
$W$ is independent of $S$.
Then $$C(\lambda) < C_F(\lambda).$$
\end{theorem}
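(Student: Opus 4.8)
The plan is to obtain the strict gap $C(\lambda)<C_F(\lambda)$ by combining the single-letter \emph{upper} bound on $C(\lambda)$ furnished by Corollary~\ref{bound} with the \emph{exact} feedback-capacity formula of Theorem~\ref{btqf}, after rewriting both mutual informations as differences of entropies. Fix $\lambda<\mu$, so that both results apply and $1/\lambda-1/\mu\geq 0$. Corollary~\ref{bound} gives $C(\lambda)\leq\lambda\sup_X I\big(W(X);W(X)+S_2\big)$ with $W(X)=(X-S_1)^+$, the supremum over all $X$ independent of $(S_1,S_2)$ (with $S_1,S_2\sim S$) such that $\mathbb{E}[W(X)]\leq 1/\lambda-1/\mu$; Theorem~\ref{btqf} gives $C_F(\lambda)=\lambda\sup_W I(W;W+S)$, the supremum over all $W\geq 0$ with $W$ independent of $S$ and $\mathbb{E}[W]\leq 1/\lambda-1/\mu$. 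The two feasible sets appearing here are exactly those in the hypothesis of the theorem.

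Next I would linearize each mutual information. Because $W$ is independent of $S$ and (differential) entropy is translation-invariant --- for the discrete-time model, invariant under the integer shifts that are the only ones relevant, since waiting times and $S$ live on a common lattice --- one has $H(W+S\mid W)=H(S)$, hence $I(W;W+S)=H(W+S)-H(S)$. Similarly $S_2$ is independent of $(X,S_1)$, hence of $W(X)$, so $I\big(W(X);W(X)+S_2\big)=H\big((X-S_1)^++S_2\big)-H(S_2)=H\big((X-S_1)^++S_2\big)-H(S)$. The decisive point is that the constant $H(S)$ subtracted is the \emph{same} in both identities. Taking suprema over the two feasible sets above and pulling the constant out of the supremum,
\begin{align*}
C(\lambda)\ &\leq\ \lambda\Big(\ \sup_{X:\mathbb{E}[W(X)]\leq 1/\lambda-1/\mu \atop W(X)=(X-S_1)^+} H\big((X-S_1)^++S_2\big)\ -\ H(S)\ \Big),\\
C_F(\lambda)\ &=\ \lambda\Big(\ \sup_{W\geq 0 \atop \mathbb{E}[W]\leq 1/\lambda-1/\mu} H(W+S)\ -\ H(S)\ \Big).
\end{align*}

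The hypothesis of the theorem is precisely that the first supremum is strictly smaller than the second. Since $H(S)$ is a finite constant ($S\geq 0$ has finite mean, so its entropy is bounded above by that of the matching exponential, resp.\ geometric, law, and in the discrete-time model $H(S)\geq 0$), subtracting the common term $\lambda H(S)$ and multiplying through by $\lambda>0$ preserves the strict inequality, and we obtain $C(\lambda)<C_F(\lambda)$.

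I expect no real obstacle inside this reduction: granted Corollary~\ref{bound} and Theorem~\ref{btqf}, it amounts to bookkeeping, the only care needed being to keep the two feasible sets in exact correspondence and to check that the translation and independence identities hold verbatim in both the continuous-time and the discrete-time models. The genuine difficulty lies in \emph{verifying} the hypothesis for a concrete service law --- as for the Bernoulli$(1/2)$ and uniform examples cited after the theorem --- which requires describing the laws realizable as $(X-S_1)^+$ under the mean constraint and showing that, once convolved with $S_2$, none of them reaches the (near-exponential, resp.\ near-geometric) entropy-maximizing shape of $W+S$, i.e.\ that the clipping $(\,\cdot-S_1)^+$ forces a strictly positive entropy deficit; this is where the specific structure of the service distribution truly enters.
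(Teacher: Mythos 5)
Your proposal is correct and follows essentially the same route as the paper: combine Corollary~\ref{bound} with the exact formula of Theorem~\ref{btqf}, rewrite each mutual information as an output entropy minus the common constant $H(S)$ using the relevant independence, and conclude the strict gap from the hypothesis (noting, as you do, that the stated condition $\lambda>\mu$ is a typo for $\lambda<\mu$). No substantive differences from the paper's argument.
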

Note that Theorem~\ref{suffcond} involves two convex optimization problems which can easily be solved with numerical methods. 
\begin{example}\label{binary}
The hypothesis of Theorem~\ref{suffcond} are satisfied in the discrete-time model if $\mathbb{P}(S=1) =\mathbb{P}(S=2) = 1/2$ and in the continuous-time model if $S$ is uniform over $[0,1]$. The proof is given in Section~\ref{proofs}.\end{example}

\subsection{When feedback does not increase capacity} \label{noninccap}

The following result provides a general condition under which feedback does not increase capacity. Let 
$$A_i \defeq a_i - a_{i-1}$$ and $$D_i \defeq  d_i - d_{i-1}$$
denote the interarrival and the interdeparture times,
respectively.

\begin{theorem}\label{th:suffcondequal}
In the discrete-time model, for a given rate $\mu>0$  suppose  that for any $ \lambda< \mu$$$H(g_{\lambda}) \leq \liminf_{n \to \infty} \frac{H(D^n)}{n}$$ where 
$$H(g_{\lambda})\defeq\frac{-\lambda \log \lambda -(1-\lambda)\log (1-\lambda)}{\lambda}$$
denotes the entropy of the geometric distribution $g_{\lambda}$ with mean ${1}/{\lambda}$ and where $D^n$ denotes the interdepartures that correspond to a sequence of interarrivals $A^n$ that are i.i.d. according to a mean $1/\lambda$ geometric distribution. Then 
$$C_F(\lambda) = C(\lambda)=\lambda (H(g_\lambda) - H(S)) \quad \lambda \leq \mu.$$
In the continuous-time model the result holds verbatim by replacing the geometric distribution with the exponential distribution and its corresponding entropy.
\end{theorem}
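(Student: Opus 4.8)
The idea is to squeeze $C(\lambda)$ between the obvious lower bound obtained by feeding the queue with geometric (resp. exponential) interarrivals and the feedback capacity $C_F(\lambda)$, which by Theorem~\ref{btqf} is an upper bound by Theorem~\ref{3b}. The hypothesis $H(g_\lambda)\le\liminf_n H(D^n)/n$ is exactly what makes the lower bound meet $\lambda(H(g_\lambda)-H(S))$, and a separate easy computation shows $C_F(\lambda)$ equals the same quantity once we observe that the geometric (resp. exponential) $W$ achieves the supremum in Theorem~\ref{btqf}.

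\textbf{Step 1 (the achievability / lower bound).} I would feed the queue, \emph{without feedback}, with interarrival times $A^n$ i.i.d. geometric (resp. exponential) of mean $1/\lambda$, so that the arrival process has output rate $\lambda$; one must check the average-delay constraint $\mathbb{E}[d_n]\lesssim n/\lambda$ holds, which is standard for a FIFO queue fed below its service rate (here $\lambda<\mu$, and the $\lambda=\mu$ endpoint is handled by continuity). A point-to-point coding argument — the same one used in \cite{bits_through_queues} to lower-bound $C(\lambda)$ — gives
\[
C(\lambda)\ \ge\ \lambda\,\liminf_{n\to\infty}\frac1n\,I(A^n;D^n).
\]
Since the queue is a deterministic function of $A^n$ and the i.i.d. service times $S^n$, and FIFO gives $D^n$ as $A^n$ perturbed by the $S_i$'s, one expands $I(A^n;D^n)=H(D^n)-H(D^n\mid A^n)$ and shows $H(D^n\mid A^n)\to nH(S)$ (the interdepartures, conditioned on the arrivals, are essentially determined by the service times — this is where the FIFO structure and the relation $b_i=\max\{d_{i-1},a_i\}$, $d_i=b_i+S_i$, are used). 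Invoking the hypothesis $H(g_\lambda)\le\liminf_n H(D^n)/n$ then yields $C(\lambda)\ge\lambda(H(g_\lambda)-H(S))$.

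\textbf{Step 2 (the converse / matching upper bound).} From Theorem~\ref{3b} and Theorem~\ref{btqf},
\[
C(\lambda)\ \le\ C_{WF}(\lambda)\ \le\ C_F(\lambda)\ =\ \lambda\!\!\sup_{W\ge0,\ \mathbb{E}[W]\le 1/\lambda-1/\mu}\!\! I(W;W+S).
\]
It remains to show this supremum equals $H(g_\lambda)-H(S)$ (for the exponential case this is precisely the Anantharam–Verd\'u identity; here we re-derive it in whichever of the two models we are in). Writing $I(W;W+S)=H(W+S)-H(S)$, one upper-bounds $H(W+S)$: the output $W+S$ is a nonnegative variable with mean $\le 1/\lambda$, and among all such distributions the geometric (resp. exponential) maximizes entropy, giving $H(W+S)\le H(g_\lambda)$ (here the discrete case uses maximum entropy over $\mathbb{Z}_{\ge0}$ under a mean constraint, the continuous case over $\mathbb{R}_{\ge0}$). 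Hence $C_F(\lambda)\le\lambda(H(g_\lambda)-H(S))$, and combined with Step~1, all three capacities coincide with $\lambda(H(g_\lambda)-H(S))$ for $\lambda<\mu$; the $\lambda=\mu$ case follows by taking $\lambda\uparrow\mu$.

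\textbf{Main obstacle.} The delicate point is Step~1: rigorously controlling $H(D^n\mid A^n)$ and, more importantly, making precise the lower bound $C(\lambda)\ge\lambda\liminf_n\frac1n I(A^n;D^n)$ for a \emph{continuous-time, continuous-alphabet} timing channel with a \emph{delay} (rather than blocklength) constraint — one has to set up blocks, pad with guard intervals to control the queue backlog between blocks, and argue the guard overhead is negligible, exactly as in \cite{bits_through_queues}. The entropy-maximization facts in Step~2 are routine; the hypothesis of the theorem is doing the real work of closing the gap, so the proof is mostly a matter of assembling it cleanly in both the discrete and continuous settings.
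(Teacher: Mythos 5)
Your overall plan matches the paper's: feed the queue with i.i.d.\ geometric (resp.\ exponential) interarrivals, use the hypothesis to evaluate the resulting rate, and close the converse by showing $\sup_{W\ge 0,\ \mathbb{E}[W]\le 1/\lambda-1/\mu} I(W;W+S)=H(g_\lambda)-H(S)$ via the maximum-entropy property of the geometric/exponential law. Your Step~2 is correct and in fact slightly leaner than the paper, which proves the converse for $C(\lambda)$ directly by Fano and only then bounds $C_F(\lambda)$; routing everything through $C(\lambda)\le C_F(\lambda)$ as you do is legitimate.

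The genuine gap is in Step~1. The inequality $C(\lambda)\ge \lambda\,\liminf_n \frac1n I(A^n;D^n)$ is not what the coding theorem for this channel gives you: the queue is a channel with memory, and the Verd\'u--Han formula used in \cite{bits_through_queues} yields $C(\lambda)=\lambda\sup_{\bf A}\underline{I}({\bf A};{\bf D})$, where $\underline{I}$ is the \emph{inf-information rate} (the $\liminf$ in probability of the normalized information density), which in general is strictly smaller than $\liminf_n\frac1n I(A^n;D^n)$. So "$H(D^n\mid A^n)\to nH(S)$ plus the hypothesis" does not by itself give achievability. The paper's fix --- and the place where the hypothesis is really doing its work --- is the sandwich $\underline{H}({\bf D})-\underline{H}({\bf D}\mid{\bf A})\le \underline{I}({\bf A};{\bf D})\le \overline{H}({\bf D})-\underline{H}({\bf D}\mid{\bf A})$ from \cite{general_formula_for_capacity}. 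Lemma~\ref{maxh} gives $\limsup_n H(D^n)/n\le H(g_\lambda)$ (since the stable queue forces $\frac1n\sum\mathbb{E}[D_i]\to 1/\lambda$), and the hypothesis gives the matching $\liminf$; together they force $\overline{H}({\bf D})=\underline{H}({\bf D})=H(g_\lambda)$ with ${\bf D}$ i.i.d.\ geometric, so the sandwich collapses and $\underline{I}({\bf A};{\bf D})=H(g_\lambda)-H(S)$. Your "main obstacle" paragraph points at guard intervals and the delay constraint, but the actual subtlety you need to address is this distinction between the mutual-information rate and the inf-information rate, and the identification of the sup- and inf-entropy rates of the output process.
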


Under LCFS (Last Come First Served) service discipline the queue always selects the last arrived packet to be served. If a packet arrives and the queue is busy the current service is interrupted and the queue starts serving the newly arrived packet (see~\cite{bits_through_queues}). 
\begin{corollary} \label{th:Gal}
For a single-server queue with LCFS discipline and bounded service time $C_F=C$.
\end{corollary}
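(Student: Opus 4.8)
\textbf{Proof proposal for Corollary~\ref{th:Gal}.} The plan is to deduce the corollary from Theorem~\ref{th:suffcondequal}: it suffices to verify that the preemptive LCFS queue with bounded service times meets the hypothesis of that theorem. Fix $\lambda<\mu$ and drive the queue with interarrivals $A^n$ that are i.i.d.\ geometric of mean $1/\lambda$ (exponential of mean $1/\lambda$ in the continuous-time model). Since the $A_i$ are i.i.d., $\frac1n H(A^n)=H(g_\lambda)$ for every $n$, so the hypothesis $H(g_\lambda)\le\liminf_n \frac1n H(D^n)$ is exactly an \emph{entropy-rate non-decreasing} statement for the LCFS queue,
$$\liminf_{n\to\infty}\frac{H(D^n)}{n}\ \ge\ \lim_{n\to\infty}\frac{H(A^n)}{n}=H(g_\lambda).$$
Granting this, Theorem~\ref{th:suffcondequal} gives $C(\lambda)=C_F(\lambda)$ for every $\lambda<\mu$; taking suprema and using $C=\sup_{\lambda<\mu}C(\lambda)$, $C_F=\sup_{\lambda<\mu}C_F(\lambda)$ yields $C=C_F$. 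The continuous-time statement follows verbatim with the exponential input and differential entropies.

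To prove the entropy-rate inequality I would invoke the entropy-increasing criterion of \cite{timing_capacity_of_discrete_queues} and check that preemptive LCFS with a service time bounded by some $s_{\max}$ satisfies its hypotheses. The structural facts to exploit are that (i) preemptive LCFS is work-conserving, so the workload is a reflected random walk with increments $S_i$ minus the interarrival gaps, and (ii) because $S\le s_{\max}$ a.s.\ while $\mathbb E[A]=1/\lambda>1/\mu=\mathbb E[S]$, the workload has light-tailed, short busy periods and returns to the empty state infinitely often; within one busy cycle the departure epochs are a measurable function of the arrivals and service times of that cycle alone. This regeneration structure lets one reconstruct the arrival sequence from the departure epochs up to a boundary ambiguity confined to the single cycle straddling the observation horizon, i.e.\ $H(A^n\mid D^n)=o(n)$, and since
$$H(D^n)=H(A^n)+H(D^n\mid A^n)-H(A^n\mid D^n)\ \ge\ H(A^n)-H(A^n\mid D^n),$$
dividing by $n$ and letting $n\to\infty$ gives the desired inequality (a sharper version of the criterion in \cite{timing_capacity_of_discrete_queues} would give $H(D^n)\ge H(A^n)$ outright).

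The main obstacle is precisely the step $H(A^n\mid D^n)=o(n)$, that is, matching preemptive LCFS with bounded service to the exact entropy-increasing condition of \cite{timing_capacity_of_discrete_queues}. Two points require care: first, LCFS reorders (and preempts) packets, so ``$D^n$'' must be read as the time-ordered sequence of departure epochs rather than the vector indexed by arrival order, and one has to be explicit about this identification when applying Theorem~\ref{th:suffcondequal}; second, one must control the packets still in the system at time $n$. I expect this to reduce to a renewal/block decomposition in which, on each busy cycle, the cycle length and the enclosed arrival pattern are jointly recoverable from the departure epochs and only the partial straddling cycle contributes uncontrolled uncertainty — which is $O(\log n)$, hence negligible, because service times and therefore cycle lengths are bounded with light tails. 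Everything else is routine bookkeeping, and the continuous-time argument is the same with exponential interarrivals.
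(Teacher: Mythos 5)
Your proposal is correct and follows essentially the same route as the paper: reduce the corollary to the hypothesis of Theorem~\ref{th:suffcondequal} (an entropy-rate non-decrease for the departure process under geometric/exponential i.i.d.\ arrivals) and discharge that hypothesis by the entropy-increasing result of Prabhakar and Gallager in \cite{timing_capacity_of_discrete_queues}, which applies to preemptive LCFS with bounded service. The extra material you supply is a sketch of the internals of that cited theorem (busy-cycle regeneration, $H(A^n\mid D^n)=o(n)$), which the paper simply takes as a black box.
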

\begin{proof}[Proof of Corollary~\ref{th:Gal}]
As a Corollary of a result by Gallager and Prabhakar (see \cite[Theorem~$2$]{timing_capacity_of_discrete_queues}) we have that LCFS  service discipline together with a bounded service time satisfies the hypothesis of Theorem~\ref{th:suffcondequal}.
\end{proof}

\begin{remark}
In \cite{bits_through_queues} Anantharam and Verd\'u also investigated the case where information can simultaneously be conveyed via bit timing and bit value. The situation is depicted in Fig.\ref{information-bearing} where $x^n$ and $y^n$ represent the input and the output vector of a discrete memoryless channel. Under FIFO service policy they showed that the capacity of the queue with information bearing symbols is
$$C_I = \sup_{\lambda<\mu}[C(\lambda)+\lambda C_0]$$
where $C_0$ denotes the capacity of the discrete memoryless channel. In fact, the above result holds also when feedback is available by replacing $C(\lambda)$ with $C_F(\lambda)$ (see \cite[Section IV Theorem 9]{bits_through_queues}). Following the same arguments one can easily conclude that the result also holds under weak feedback by replacing $C(\lambda)$ with $C_{WF}(\lambda)$.
\begin{figure}
	\centering
	\includegraphics[scale=0.65]{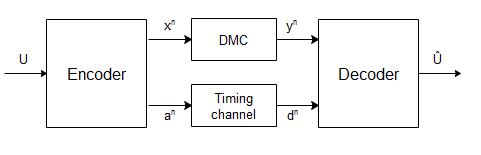}
	\caption{Information-bearing queue.}
	\label{information-bearing}
\end{figure}

\end{remark}

\section{Proofs}
In this section we limit ourselves to the discrete-time model. The proofs in the continuous time setting amount to changing probability mass functions to densities and replacing the geometric distribution with the exponential distribution---since the exponential distribution, similarly as the geometric distribution in the discrete-time setup, maximizes entropy among all continuous random variables with the same mean.
\label{proofs}

\subsection{Proof of Theorem \ref{bpos}}

To prove the theorem we show that, for any strategy that allows the queuing of more than one packet the distribution of the vector of departures given the sent message $\mathbb{P}(d^n|u)$ is the same as that of a modified strategy where the $i$\/th arrival is not allowed to happen before the $(i-1)$\/th starts being served. Hence the joint distribution between the input and the output of the channel ($\mathbb{P}(u,d^n)$) remains the same and so does the error probability (for any decoding rule).

Fix an encoding strategy. Given a realisation $a^n$ of arrivals we denote by $\tilde{a}^n$ the modified realisation where
\begin{align*}
\tilde{a}_i= \max\{a_i, b_{i-1}\} .
\end{align*}
We have 
\begin{align}\label{pn}
\mathbb{P}(d^n|u) &= \prod_{i=1}^n \mathbb{P}(d_i|u,d^{i-1}).
\end{align}
Observe that $a_1$ is uniquely determined by $u$ and that  $b_1=a_1$ as the queue is supposed to be initially empty. Then for $i \geq 2$ arrival $a_i$ depends on both  $u$ and the weak feedback $b^{i-1}$ where
\begin{align*}
b_i=\max(a_i, d_{i-1}).
\end{align*}
Thus, from $u$ and $d^{i-1}$ we can recursively compute $(a_1,b_1),(a_2,b_2),\ldots ,(a_i,b_i)$ and
it follows that
\begin{align}
\mathbb{P}(d_i|u,d^{i-1})  &=\mathbb{P}(d_i|u, d^{i-1}, b^i, a^{i}) \nonumber \\
  &=\mathbb{P}(S_i = d_i - \max(a_i,d_{i-1} ) | u, d^{i-1}, b^i, a^{i})\nonumber \\
  &=\mathbb{P}(S_i = d_i - \max(a_i,d_{i-1} )) \label{pdi}
\end{align}
where the second equality holds because
\begin{align*}
d_i &= b_i + S_i \\
  &= \max(a_i, d_{i-1}) +S_i
\end{align*}
by \eqref{bprop}.
If we change $a_i$ to $\tilde{a}_i$ then either $$a_i \geq b_{i-1}$$ and $$a_i = \tilde{a}_i$$ or $$a_i<b_{i-1} = \tilde{a}_i \leq d_{i-1}.$$ Hence in both cases $$\max(a_i,d_{i-1} )  = \max(\tilde{a}_i,d_{i-1} ). $$
It follows that if we change $a_i$ to $\tilde{a}_i$ then the conditional probability $\mathbb{P}(d_i|u,d^{i-1})$ remains the same by \eqref{pdi} and so will the distribution of $d^n$ conditioned on $u$ by \eqref{pn}. \qedwhite

\subsection{Proof of Theorem \ref{wfcapacity}}
To prove the theorem, let $U\in \{1, \cdots , M\}$ denote the uniformly chosen message to be transmitted and let  $V\in \{1, \cdots , M\}$ denote the decoded message. Then
\begin{align*}
I(U;V)  &\overset{(a)}{\leq} I(U; D_1,\cdots, D_n) \\
  &=\sum_{i=1}^nI(U;D_i|D^{i-1}) \\
  &\overset{(b)}{\leq} \sum_{i=1}^nI(W_i;D_i|D^{i-1}) \\
  &= \sum_{i=1}^n \left( H(D_i|D^{i-1}) - H(D_i|W_i,D^{i-1}) \right) \\
  &= \sum_{i=1}^n \left( H(D_i|D^{i-1}) - H(S_i) \right) \\
  &{=}\sum_{i=1}^n \left( H(D_i|D^{i-1}) \right) - n H(S) \\
  &= H(D^n) - n H(S) \\
  &\leq n \left( \left( \sup_{X^n \geq 0 \atop \frac1n \mathbb{E}[\sum_1^nD_i] \leq \frac{1}{\lambda} } \frac1n H(D^n) \right) - H(S) \right).
\end{align*}
where $W_i$ denotes the waiting time of the queue between the departure of the $(i-1)$\/th packet and the beginning of the service of the $i$\/th packet:
$$W_i\defeq D_i - S_i.$$
Inequality $(a)$ holds because of the Markov  chain $U\rightarrow D^n\rightarrow V$. Inequality $(b)$ holds since conditioned on $D^{i-1}$ we have the Markov chain $U \rightarrow W_i \rightarrow D_i.$ 
Indeed,  $D_i$ is only a function of $S_i$ when conditioned on $W_i$ and, in turn, $S_i$ is independent of $D^{i-1}$ and $U$.

From Fano's inequality and the data processing inequality \cite{Cover_Thomas} we have for any $(n,M,n/\lambda,\epsilon)$-code
$$\log M \leq \frac{1}{1-\epsilon}[I(U;V) +1]$$
and it follows that 
\begin{align*}
\lambda \frac{\log M}{n} \leq \frac{\lambda}{1-\epsilon}  	\left(    \left( \sup_{X^n \geq 0 \atop \frac1n\mathbb{E}[\sum_1^n D_i] \leq \frac{1}{\lambda} } \frac1n H(D^n) \right) - H(S)+ \frac{1}{n} \right).
\end{align*}
The theorem follows.

\subsection{Proof of Theorem \ref{wfc}}
From Theorem~\ref{wfcapacity} we have
$$C_{WF}(\lambda) \leq \lambda \left(  \lim_{n \to \infty} \sup_{X^n \geq 0 \atop \frac1n \mathbb{E}[\sum_1^n D_i] \leq \frac{1}{\lambda} } \left(  \frac1n H(D^n) - H(S) \right)  \right)$$
where $ D_i=  (X_i-S_{i-1})^+ + S_i$.
Now,\begin{align*}
\frac1n H( D^n) -H(S)  &\leq \frac1n\left( \sum_{i=1}^n H(D_i) \right) - H(S)\\
  &=  \frac1n  \sum_{i=1}^n \left[H((X_i-S_{i-1})^++S_i) - H(S) \right] \\
  &{=}  \frac1n \sum_{i=1}^n \left[ H((X_i-S_{i-1})^++S_i) - H(S_i) \right]\\
  &=\frac1n \sum_{i=1}^n \left[ H(W(X_i)+S_i) - H(W(X_i)+S_i| W(X_i) \right]\\
  &= \frac1n \sum_{i=1}^n  I(W(X_i) ; W(X_i)+S_i)
\end{align*}
where $W(X_i) \defeq (X_i-S_{i-1})^+$. Letting 
$$c(a) \defeq \sup_{X: \mathbb{E}[W(X)] \leq a \atop W(X)=  (X-S_1)^+} I(W(X) ; W(X)+S_2)$$
we have
\begin{align*}
I(W(X_i) ; W(X_i)+S_2)\leq c(\mathbb{E}[W(X_i)]) .
\end{align*}
Assuming $c$ to be concave an non-decreasing---see Lemma~\ref{concave} below---it follows that
\begin{align*}
\frac1n H(D^n) -H(S) &\leq \frac1n \sum_{i=1}^n c(\mathbb{E}[W(X_i)]) \\
  &\leq c\left(\frac1n \sum_{i=1}^n \mathbb{E}[W(X_i)]\right)\\
  &= c\left( \mathbb{E}\left[\frac1n\sum_{i=1}^n W(X_i)\right] \right)\\
  &=c\left( \mathbb{E}\left[\frac1n\sum_{i=1}^n (D_i-S_i)\right] \right)\\
  &=c\left( \mathbb{E}\left[\frac1n\sum_{i=1}^n D_i\right]-\mathbb{E}\left[\frac1n\sum_{i=1}^n S_i\right] \right)\\
  &\leq c \left( \frac{1}{\lambda} - \frac{1}{\mu} \right)
\end{align*}
since $\mathbb{E}[S_i] = 1/\mu$ and $\frac1n \mathbb{E}[\sum_1^n D_i] \leq \frac{1}{\lambda}$. The theorem follows. \qedwhite
\begin{lemma}\label{concave}
The function
$$c(a) = \sup_{X:\mathbb{E}[W(X)] \leq a \atop W(X)=  (X-S_1)^+} I(W(X) ; W(X)+S_2).$$
where $(X,S_1,S_2)$ are independent and where $S_1$ and $S_2$ are distributed according to the service time is concave and non-decreasing. 
\end{lemma}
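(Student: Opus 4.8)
The plan is to prove the two claims—non-decreasingness and concavity—separately, since non-decreasingness is essentially trivial and concavity is where the real work lies.

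\textbf{Monotonicity.} For non-decreasingness, observe that the feasible set $\{X : \mathbb{E}[W(X)]\leq a,\ W(X)=(X-S_1)^+\}$ grows with $a$: if $a\leq a'$ then every $X$ feasible for $a$ is feasible for $a'$. Taking the supremum of a fixed objective over a larger set can only increase it, so $c(a)\leq c(a')$. No continuity or regularity is needed here.

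\textbf{Concavity.} The standard route is a time-sharing / mixing argument. Fix $a_0,a_1\geq 0$ and $\theta\in[0,1]$, and let $a_\theta=\theta a_0+(1-\theta)a_1$. Given $\epsilon>0$, pick $X^{(0)}$ feasible for $a_0$ and $X^{(1)}$ feasible for $a_1$ that come within $\epsilon$ of $c(a_0)$ and $c(a_1)$ respectively. Introduce an auxiliary Bernoulli($1-\theta$) random variable $Q$, independent of $(S_1,S_2)$, and define $X=X^{(Q)}$ (i.e. $X=X^{(0)}$ when $Q=0$ and $X=X^{(1)}$ when $Q=1$, with the corresponding $X^{(q)}$ independent of $(S_1,S_2)$). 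Writing $W=W(X)=(X-S_1)^+$, one checks $\mathbb{E}[W]=\theta\,\mathbb{E}[W(X^{(0)})]+(1-\theta)\mathbb{E}[W(X^{(1)})]\leq a_\theta$, so $X$ is feasible for $a_\theta$. Hence
\begin{align*}
c(a_\theta)\geq I(W;W+S_2)\geq I(W;W+S_2\mid Q)=\theta\, I(W(X^{(0)});W(X^{(0)})+S_2)+(1-\theta)\,I(W(X^{(1)});W(X^{(1)})+S_2),
\end{align*}
where the middle inequality $I(W;W+S_2)\geq I(W;W+S_2\mid Q)$ is the key step. Letting $\epsilon\to 0$ then gives $c(a_\theta)\geq \theta c(a_0)+(1-\theta)c(a_1)$.

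\textbf{Main obstacle.} The crux is justifying $I(W;W+S_2)\geq I(W;W+S_2\mid Q)$. This is \emph{not} the generic direction—conditioning can increase mutual information in general—so one must exploit the structure. The right observation is the Markov chain $Q\rightarrow W\rightarrow W+S_2$: given $W=(X-S_1)^+$, the output $W+S_2$ depends only on $S_2$, which is independent of $Q$, so $Q$ and $W+S_2$ are conditionally independent given $W$. Under this Markov chain, $I(Q;W+S_2\mid W)=0$, and expanding $I(Q,W;W+S_2)$ in two ways gives $I(W;W+S_2\mid Q)+I(Q;W+S_2)=I(Q;W+S_2\mid W)+I(W;W+S_2)=I(W;W+S_2)$, whence $I(W;W+S_2\mid Q)\leq I(W;W+S_2)$ as needed. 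I would also briefly note measurability/existence of the mixed random variable $X$ is routine (enlarge the probability space to carry the independent $Q$). The continuous-time case is identical with densities in place of pmfs.
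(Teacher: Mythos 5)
Your proof is correct and follows essentially the same route as the paper's: both mix two near-optimal inputs and exploit that the channel $W \mapsto W+S_2$ adds noise $S_2$ independent of everything else, so the conditional entropy term stays fixed at $H(S_2)$. The paper realizes the mixture explicitly at the level of distributions and invokes concavity of $H(W+S_2)$, which is exactly your inequality $I(W;W+S_2)\geq I(W;W+S_2\mid Q)$ obtained from the Markov chain $Q\rightarrow W\rightarrow W+S_2$; the two formulations are interchangeable.
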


\begin{proof}
It is clear that $c$ is non-decreasing. Now let $a < b$ be non-negative numbers, let $\alpha \in [0,1]$, and let $X_1$ and $X_2$ be two independent random variables that achieve $c(a)$ and $c(b)$, respectively. We assume here that the suprema are achieved. At the end of the proof we provide the slight extension to the case where the suprema are not achieved.  
 
Define integer valued random variable $X$ such that 
$$\mathbb{P}(X=k) = \alpha \, \mathbb{P}(X_1 = k) + (1-\alpha) \, \mathbb{P}(X_2 = k).$$
Then for  $n > 0$
\begin{align*}
\mathbb{P}(W(X)=n) &= \mathbb{P}((X-S)^+=n) \\
  &= \sum_{k=1}^{\infty} \mathbb{P}(S=k) \cdot \mathbb{P}(X=n+k) \\
  &= \sum_{k=1}^{\infty}  \mathbb{P}(S=k) \, (\alpha \cdot\mathbb{P}(X_1=n+k) + (1-\alpha) \cdot \mathbb{P}(X_2 = n+k)) \\
  &= \alpha \, \sum_{k=1}^{\infty} \mathbb{P}(S=k)\cdot \mathbb{P}(X_1=n+k)  + (1-\alpha) \, \sum_{k=1}^{\infty} \mathbb{P}(S=k)\cdot\mathbb{P}(X_1=n+k) \\
  &= \alpha \cdot \mathbb{P}(Z(X_1) = n) + (1-\alpha) \cdot \mathbb{P}(Z(X_2) = n )
\end{align*}
and for $n=0$
\begin{align*}
\mathbb{P}(W(X)=0) &= \mathbb{P}(X\leq S) \\
  &= \alpha \cdot \mathbb{P}(X_1 \leq S) + (1-\alpha) \cdot \mathbb{P}(X_2 \leq S) \\
  &= \alpha \cdot \mathbb{P}(Z(X_1=0) + (1-\alpha) \cdot \mathbb{P}(Z(X_2)=0).
\end{align*}
Hence
\begin{align*}
\mathbb{E}[W(X)] &= \alpha \, \mathbb{E}[W(X_1)] + (1-\alpha) \, \mathbb{E}[W(X_2)] \\
  &\leq \alpha \cdot a + (1- \alpha) \cdot b.
\end{align*}

Moreover,
\begin{align*}
I(W(X); W(X) + S_2 ) &= H(W(X)+ S_2) - H(W(X)+S_2 | W(X)) \\
  &= H(W(X) +S_2) -H(S_2)\\
  &\overset{(a)}{\geq} \alpha \cdot H(W(X_1)+S_2) + (1-\alpha) \cdot H(W(X_2)+S_2)- H(S_2)\\
  &= \alpha \cdot I(W(X_1);W(X_1)+S_2)  + (1-\alpha) \cdot I(W(X_2);W(X_2)+S_2)\\
  &= \alpha \cdot c(a) + (1-\alpha) \cdot c(b)
\end{align*}
where $(a)$ holds because the distribution of $W(X)+S_2$ is linear in the distribution of $X$ and by concavity of entropy.
Since  $$\mathbb{E}[W(X)] \leq \alpha \cdot a + (1- \alpha) \cdot b$$ we have $$I(W(X);W(X)+S_2) \leq c(  \alpha \cdot a + (1- \alpha) \cdot b)$$ and we deduce the desired result.

In the argument at the beginning of the proof,  if the suprema are not achieved the for an arbitrary $\epsilon > 0$ it suffices to take $X_1$ such that $$I(W(X_1) ; W(X_1) + S_2) > c(a) - \epsilon$$ and $\mathbb{E}[W(X_1)] \leq a$ and similarly for $X_2$ by replacing $a$ by $b$. The same calculation leads to $$I(W(X); W(X) + S_2 ) \geq \alpha \cdot c(a) + (1-\alpha) \cdot c(b) - \epsilon,$$ that is $$ c(  \alpha \cdot a + (1- \alpha) \cdot b) \geq \alpha \cdot c(a) + (1-\alpha)\cdot c(b) - \epsilon$$ and we obtain the desired result by taking $\epsilon\to0$.
\end{proof}

\subsection{Proof of Theorem \ref{suffcond}}
Since
\begin{align*}
 C(\lambda) \leq \lambda \sup_{X:\mathbb{E}(W(X))\leq1/\lambda-1/\mu  \atop W(X) =(X-S_1)^+ } I(W(X),W(X)+S_2)
\end{align*} by 
Corollary \ref{bound} and
since
$$C_F(\lambda) = \lambda \sup_{W\geq0 \atop \mathbb{E}[W]\leq\frac{1}{\lambda}-\frac{1}{\mu}} I(W;W+S),$$ 
if
\begin{align}\label{ineqq}  \sup_{W\geq0 \atop \mathbb{E}[W]\leq\frac{1}{\lambda}-\frac{1}{\mu}} I(W;W+S) > \sup_{X:\mathbb{E}[W(X)]\leq\frac{1}{\lambda}-\frac{1}{\mu} \atop W(X)= (X-S_1)^+}I(W(X);W(X)+S_2)\end{align}
then $$C_F(\lambda) > C(\lambda).$$

For the left-hand side of \eqref{ineqq} we have
$$\sup_{W\geq0 \atop \mathbb{E}[W]\leq\frac{1}{\lambda}-\frac{1}{\mu}} I(W;W+S) = \left[ \sup_{W\geq0 \atop \mathbb{E}[W]\leq\frac{1}{\lambda}-\frac{1}{\mu}} H(W+S) \right] -H(S)$$
since $S$ is independent of $W$.
Similarly, for the right-hand side of \eqref{ineqq} we have
$$\sup_{X:\mathbb{E}(W(X))\leq1/\lambda-1/\mu \atop W(X)= (X-S_1)^+} I(W(X) ; W(X)+S_2) \\ = \left[ \sup_{X:\mathbb{E}[W(X)]\leq\frac{1}{\lambda}-\frac{1}{\mu} \atop W(X)=(X-S_1)^+}H(W(X)+S_2) \right] -H(S)$$
since  $S_2$ is independent of $(X,S_1)$ and since $S_2$ has the same distribution as $S$. The theorem follows.\qedwhite

\subsection{Proof of Example~\ref{binary}}
According to Theorem \ref{suffcond} it is sufficient to show that
\begin{align}\label{lagr}\sup_{W\geq0 \atop \mathbb{E}[W]\leq\frac{1}{\lambda}-\frac{1}{\mu}} H(W+S)>\sup_{X:\mathbb{E}[(X-S_1)^+]\leq\frac{1}{\lambda}-\frac{1}{\mu}}H((X-S_1)^++S_2).
\end{align}
To do this we will show, via Lagrange multipliers, that the optimal distributions of the above two optimization problems are not the same for the two examples.
\subsubsection{Discrete-time model: binary distribution}
 As will be apparent in the proof the values $S$ takes are irrelevant (provided they differ) but analysis gets somewhat simplified when $S\in \{1,2\}$.

Given a non-negative integer valued random variable $W\sim p_W$ let
\begin{align*}
v_n &\defeq \mathbb{P}[W+S=n] 
\end{align*}
which is defined for integers $n\geq 1$ since $W\geq 0$ and $S\in \{1,2\}$. The Lagrangian with respect to the optimization on the left-hand side of \eqref{lagr} is given by 
\begin{multline*}
L_1(p_W, \alpha, \beta) = - \sum_{n=1}^{\infty} v_n \log(v_n)  - \alpha \left( \sum_{k=0}^{\infty} p_W(n) - 1\right) - \beta \left( \sum_{n=0}^{\infty} n p_W(n) - \left( \frac{1}{\lambda}-\frac{1}{\mu} \right) \right).
\end{multline*}

Setting to zero the derivative of $L_1(p_W, \alpha, \beta)$ with respect to $p_W(n)$ we get
\begin{equation}
\label{output with feedback} 
-\frac{1}{2}\log(v_{n+1}) =\frac{1}{2} \log v_n + 1 + \alpha + \beta (n-1) \qquad n\geq 1.
\end{equation}
By solving this recursive equation we obtain for any $n \geq 0$
\begin{align}\label{eqqua1}
v_{1+n} &= e^{-n\beta} \times
\begin{cases}
v_1 \text{ if $n$ is even} \\
v_2 e^\beta = \frac{1}{v_1} e^{-2(\alpha+1)+\beta} \text{ if $n$ is odd} 
\end{cases}
\end{align}
which together with the constraint
\begin{align}
\label{eqqua2}
\sum_{n=1}^{\infty} v_{n} = 1
\end{align}
implies that $\beta>0$ (for otherwise the $\{v_n\}$ do not sum to one). The fact that $\beta>0$ together with the Karush-Kuhn-Tucker condition relative to the constraint $\mathbb{E}[Z]\leq\frac{1}{\lambda}-\frac{1}{\mu}$ implies the equality
\begin{align}\label{eqqua3}\sum_{n=0}^{\infty} (1+n) v_{1+n} = \frac{1}{\lambda}.
\end{align}
Equations \eqref{eqqua1},\eqref{eqqua2} and Equations \eqref{eqqua1},\eqref{eqqua3} imply
\begin{align}\label{eqqua4}
 v_1 + v_2 = 1- e^{-2\beta}
\end{align}
and
\begin{align}\label{eqqua5}
v_2+(1+\lambda)v_1 = 2 \lambda,
\end{align}
respectively. 

Similarly, we define the Lagrangian with respect to the right-hand side of~\eqref{lagr}
\begin{multline*}
L_2(p_X,\gamma,\delta) =  - \sum_{n=1}^{\infty} w_n \log w_n   - \gamma \left( \sum_{n=0}^{\infty} p_X(n) - 1\right)  \\ -\delta \left( \left[ \sum_{k=0}^{\infty} n ( \frac{1}{2} \, p_X(n+1) + \frac{1}{2} \, p_X(n+2)) \right] - \left[ \frac{1}{\lambda}-\frac{1}{\mu} \right]\right)
\end{multline*}
where 
\begin{align*}
{w}_n &\defeq \mathbb{P}[(X-S_1)^++S_2=n] \quad n\geq 1.
\end{align*}
Setting to zero the derivative of $L_2(p_X,\gamma,\delta)$ we get 
\begin{align}\label{eqfo1}
 - \frac{\log w_1+\log w_2}{2}  =1 +\gamma
\end{align}
and
\begin{align}
-\log  w_{n+1}-2 \log w_n - \log w_{n-1}  =4+ 4\gamma + 2\delta ( 2n-3)   \label{output without feedback}
\end{align}
for $ n \geq 2 $.
The sequence $\{-\log w_n\}_{n\geq 1}$ is therefore the solution of an order two non-homogeneous recursive equation of characteristic polynom $X^2+2X+1 = (X+1)^2$.
Hence there exist two constants $c_1$ and $c_2$ and a particular solution $\{p_n\}_{n\geq 1}$ of the non-homogeneous equation such that 
\begin{align}
-\log w_n = (-1)^n(c_1+c_2 n) + p_n. \label{wn}
\end{align}
It can be verified that a particular solution is given by
\begin{align*}
p_n &= 1 +  \gamma - \frac32 + \delta n 
\end{align*}
or, equivalently,
\begin{align}\label{eqfo2}
p_n & = - \frac{\log w_1 + \log w_2 }{2} - \frac32 +\delta n
\end{align}
by \eqref{eqfo1}.  From \eqref{wn} with $n=1,2$ and \eqref{eqfo2} we get
\begin{align}\label{c1}
c_1 = \frac{\log w_1 - \log w_2}{2}- \frac92 +4\delta
\end{align}
\begin{align}\label{c2}
c_2 = 3(1-\delta ).
\end{align}

Now let us assume that  $$w_n = v_n$$ for all $n\geq 1$.
By injecting (\ref{output with feedback}) into (\ref{output without feedback}) we obtain that $$\alpha = \gamma \quad \text{ and } \quad \beta = \delta.$$
Combining \eqref{eqqua1} and \eqref{wn} we get $c_2=0$ and, equivalently, $ \delta=1$ by \eqref{c2}. 
Hence, from \eqref{wn} and \eqref{c1} we have
\begin{align*}
  -\log w_1 = - \log w_2 -1.
\end{align*}
This together with \eqref{eqqua4} yields
$$w_1 = \frac{1 - e^{-2}}{1+e}$$
and therefore from \eqref{eqqua5}
$$\lambda = \frac{(1+e)(1-e^{-2})}{1+e^{-2}+2e}.$$
Hence,  if $\lambda \neq \frac{(1+e)(1-e^{-2})}{1+e^{-2}+2e}$ the distributions $\{v_n\}$ and $\{w_n\}$ are not equal. This in turn implies that
$$\sup_{X:\mathbb{E}[(X-S_1)^+]\leq\frac{1}{\lambda}-\frac{1}{\mu}}H((X-S_1)^++S_2) < \sup_{Z\geq0 \atop \mathbb{E}[W]\leq\frac{1}{\lambda}-\frac{1}{\mu}} H(W+S)$$
 and it follows that for any $\lambda \neq \frac{(1+e)(1-e^{-2})}{1+e^{-2}+2e}$
$$C_{WF}(\lambda) < C_F(\lambda).$$

It can be verified that for $\lambda = \frac{(1+e)(1-e^{-2})}{1+e^{-2}+2e}$ we have that $\frac{\partial C_F(\lambda)}{\partial \lambda}$ or, equivalently, 
$$\frac{\partial H(W+S)}{\partial \lambda}=\frac{-\partial{ \sum_{n=1}^{\infty} v_n \log(v_n)}}{\partial \lambda}$$
differs from zero.
Hence $C_F$ is not achieved for $\lambda = \frac{(1+e)(1-e^{-2})}{1+e^{-2}+2e}$ and therefore
$$C_{WF}<C_F$$ which in turns implies $$C<C_F.$$
\qedwhite

\subsubsection{Continuous-time model: uniform distribution}
We compute the distribution of $W+S$ that achieves the left-hand side of \eqref{lagr}.

Since $S$ is uniform over $[0,1]$, $W+S$ has density
$$f_{W+S}(t) =F_W(t)-F_W(t-1)$$
where $F_W$ denotes the cumulative density function of $W$.
The Lagrangian corresponding to the left-hand side of  \eqref{lagr} is
$$L_1(\mathbb{P}_W, \alpha, \beta)=- \int_t f_{W+S}(t) \log f_{W+S}(t) dt - \alpha \int_{\mathbb{R}} d\mathbb{P}_W - \beta \int_{t \in \mathbb{R}}t d\mathbb{P}_W, $$ and its G\^ateaux derivative with respect to a measure $\nu$ (see, {\it{e.g.}}, \cite[Chapter 7.2]{gateaux}) is given by
\begin{align} \label{gat1}
\delta L_1(\mathbb{P}_W,\alpha, \beta ; \nu) = \int_t A_\nu(t) (1+ \log A_{\mathbb{P}_W}(t)) dt - \alpha \int_{\mathbb{R}}d\nu - \beta \int_{t \in \mathbb{R}} t d\nu
\end{align}
where $$A_\nu: t \mapsto \nu(]t-1;t])$$
for any measure $\nu$.
The cone of the distributions $W$ such that $\mathbb{E}[W] \leq \frac{1}{\lambda}-\frac{1}{\mu}$ contains interior points because $\lambda < \mu$. Hence we can apply the generalized Kuhn-Tucker Theorem \cite[Chapter 9.4]{gateaux} which says that for any $W$ that maximizes the left-hand side of \eqref{lagr} expression \eqref{gat1} must be equal to zero for any measure $\nu$. In particular, if we let $\nu$ be a Dirac $\delta_{t_0}$ at some $t_0 \in \mathbb{R}^+$ we get
$$\delta L_1(\mathbb{P}_W,\alpha, \beta ; \delta_{t_0}) = - \int_{t=t_0}^{t_0 + 1} (1+\log  A_{\mathbb{P}_W}(t)) dt - \alpha - \beta t_0=0.$$ 
Differentiating with respect to $t_0$ we get
$$-\log  A_{\mathbb{P}_W}(t_0+1) +\log  A_{\mathbb{P}_W}(t_0) = \beta.$$
We note that $\beta \ne 0$ because otherwise $\mathbb{P}_W$ does not sum to one. Using the fact that $S$ is uniform over $[0,1]$ we get $A_{\mathbb{P}_W} = f_{W+S}$ and therefore we deduce that $W+S$ should satisfy
\begin{align} \label{ffb}
f_{W+S}(t_0 +1) = \frac{f_{W+S}(t_0)}{\beta} \qquad t_0 >0. 
\end{align}

We now compute the distribution of $X$ that achieves the right-hand side of \eqref{lagr}.
Setting the G\^ateaux derivative of the Lagrangian to zero gives
\begin{align} \label{gat2}
\delta L_2(\mathbb{P}_X,\gamma, \delta ; \nu)=\int_t B_\nu(t) (1+ \log B_{\mathbb{P}_X}(t)) dt - \gamma \int_{\mathbb{R}}d\nu - \delta \int_{t \in \mathbb{R}} t d\nu =0
\end{align}
where $$B_\nu : t \mapsto \int_{x=0}^1 \nu( ]t-x ; t-x+1])dx $$
for any measure $\nu$.
As previously, by letting $\nu$ be a Dirac $\delta_{t_0}$ and using the fact that $B_{\mathbb{P}_X} = f_{(X-S_1)^+ +S_2}$ we get 
\begin{align} \label{fwfb}
f_{(X-S_1)^++S_2}(t_0 +1) = \frac{f_{(X-S_1)^+ +S_2}(t_0-1)}{\delta} \quad t_0>0.
\end{align}
Now suppose that
\begin{align}\label{hypothese}
\sup_{W \geq 0 \atop \mathbb{E}[W] \leq \frac{1}{\lambda}-\frac{1}{\mu}} H(W+S)  = \sup_{X : \mathbb{E}[(X-S_1)^+] \leq \frac{1}{\lambda}-\frac{1}{\mu}} H((X-S_1)^++S_2). 
\end{align}

This implies that that there exists $W$ and $X$ that satisfy the above constraints and such that $$f_{(X-S_1)^++S_2}=f_{W+S}$$ and from \eqref{ffb} and \eqref{fwfb} we get
\begin{align} \label{deqb2}
 \delta = \beta ^2.
\end{align}
Moreover, let us define $$\nu_1 : ]a;b] \mapsto \int_{t=a}^b \nu_2(]t;t+1] dt$$
for any let $\nu_2$ such that $\int_\mathbb{R} d_{\nu_2}=1$. It then follows that for any such $\nu_2$ $$\delta L_1(\mathbb{P}_Z,\alpha, \beta ; \nu_1) - \delta L_2(\mathbb{P}_X,\gamma, \delta ; \nu_2)=\gamma -\alpha +(\delta-\beta) \int_{t=0}^{\infty} t \nu_2(]t;t+1]) dt=0.$$ Hence,
$\beta = \delta \quad (\text{and} \;\alpha = \gamma).$
This together with \eqref{deqb2} and the fact that $\beta \neq 0$ (see above) implies that $$\beta=1.$$
This by \eqref{ffb} implies that $f_{W+S}$ does not sum to $1$ and is therefore not a density. Because of this contradiction we deduce that \eqref{hypothese} does not hold, which concludes the proof.\qedwhite

\subsection{Proof of Theorem~\ref{th:suffcondequal}}

The following lemma is essentially a consequence of the well-known fact (see \cite{maxentropy}) that the geometric random variable maximizes the entropy among all discrete random variables with the same mean:
\begin{lemma} \label{maxh} 
Suppose $X^n$ satisfies
$$\frac{\sum_{k=1}^n \mathbb{E}[X_k]}{n} \leq \frac{1}{\lambda}.$$
Then
\begin{align*} 
\frac{H(X^n)}{n} \leq H(g_{\lambda}) = - \frac{(1-\lambda)\log(1-\lambda) + \lambda \log(\lambda)}{\lambda}
\end{align*}
where $g_{\lambda}$ denotes the geometric distribution with mean $1/{\lambda}$. Moreover, the above inequality is tight if and only if $X^n$ is i.i.d. geometrically distributed.
\end{lemma}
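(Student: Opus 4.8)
The plan is to reduce the $n$-letter bound to a single-letter one by subadditivity of entropy, then apply the cited maximum-entropy property of the geometric law, upgraded by a concavity argument to accommodate the averaged mean constraint. First I would write $H(X^n)\le\sum_{k=1}^n H(X_k)$, which holds with equality if and only if $X_1,\dots,X_n$ are mutually independent. Then, for a non-negative integer-valued random variable $X$, let $\phi(m)$ denote the supremum of $H(X)$ over all such $X$ with $\mathbb{E}[X]\le m$; the fact quoted before the lemma (see \cite{maxentropy}) is precisely that $\phi(m)=H(g_{1/m})$, the entropy of the geometric law of mean $m$, attained \emph{only} by that geometric law. In particular $H(X_k)\le\phi(\mathbb{E}[X_k])$ for each $k$.

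The remaining ingredient is that $\phi$ is non-decreasing and (strictly) concave. Monotonicity is immediate from the definition with an inequality constraint. For concavity I would argue as in the proof of Lemma~\ref{concave}: given means $m_1<m_2$ with optimal geometric laws $p_1,p_2$ and $\alpha\in[0,1]$, the mixture $\alpha p_1+(1-\alpha)p_2$ has mean $\alpha m_1+(1-\alpha)m_2$ and, by concavity of entropy, entropy at least $\alpha\phi(m_1)+(1-\alpha)\phi(m_2)$, whence $\phi(\alpha m_1+(1-\alpha)m_2)\ge\alpha\phi(m_1)+(1-\alpha)\phi(m_2)$; strictness follows since distinct geometric laws differ and entropy is strictly concave (or directly from the closed form of $\phi$). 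Combining with Jensen's inequality,
\begin{align*}
\frac{H(X^n)}{n}&\le\frac1n\sum_{k=1}^n H(X_k)\le\frac1n\sum_{k=1}^n\phi(\mathbb{E}[X_k])\\
&\le\phi\!\left(\frac1n\sum_{k=1}^n\mathbb{E}[X_k]\right)\le\phi(1/\lambda)=H(g_\lambda),
\end{align*}
the last step using the hypothesis $\tfrac1n\sum_k\mathbb{E}[X_k]\le1/\lambda$ and the monotonicity of $\phi$.

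For the equality case I would track each inequality backwards: equality in the chain forces $X_1,\dots,X_n$ independent (subadditivity), each $X_k$ geometric with mean $\mathbb{E}[X_k]$ (the uniqueness clause of the cited fact), all the $\mathbb{E}[X_k]$ equal to a common value $\bar m$ (strict concavity of $\phi$ in the Jensen step), and finally $\bar m=1/\lambda$ (strict monotonicity of $\phi$ together with $\phi(\bar m)=\phi(1/\lambda)$); hence $X^n$ is i.i.d.\ geometric of mean $1/\lambda$, and the converse is immediate. I expect the inequality itself to be essentially mechanical; the only points demanding a little care — and the place I would be most careful — are establishing that $\phi$ is \emph{strictly} concave and \emph{strictly} increasing, since the equality characterization rests entirely on those two refinements.
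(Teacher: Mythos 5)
Your argument is correct and is exactly the route the paper intends: the paper offers no proof of this lemma beyond the remark that it follows from the max-entropy property of the geometric law, and your write-up supplies the two details that remark glosses over, namely the subadditivity reduction and the concavity/Jensen step needed because the constraint is on the \emph{average} of the means rather than on each mean individually. (A marginally slicker alternative that avoids proving concavity of $\phi$ is the cross-entropy bound $H(X^n)\leq -\mathbb{E}\log Q(X^n)$ with $Q=g_\lambda^{\otimes n}$, which linearizes the mean constraint and yields the equality case from $D(P\Vert Q)=0$ in one stroke; but your version is fine.)

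One small point to fix: you state the single-letter fact for \emph{non-negative} integer-valued variables, whereas the closed form $H(g_\lambda)=-\bigl((1-\lambda)\log(1-\lambda)+\lambda\log\lambda\bigr)/\lambda$ is the entropy of the geometric law supported on $\{1,2,\dots\}$. Over support $\{0,1,2,\dots\}$ the mean-$m$ entropy maximizer is a different geometric law with strictly larger entropy, so $\phi(m)=H(g_{1/m})$ as you use it would be false. In context the $X_k$ are discrete-time interdeparture times, hence at least $1$, so the statement and your proof are consistent once the support convention $\{1,2,\dots\}$ is made explicit. Your attention to strict concavity and strict monotonicity of $\phi$ for the equality case is warranted and both check out from the closed form ($\phi'(m)=\log\frac{m}{m-1}>0$), or from the mixture argument combined with strict concavity of entropy.
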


We first show that 
$$C(\lambda)= \lambda (H(g_\lambda) - H(S))\qquad \lambda < \mu. $$
We start with the achievability part. A general capacity result \cite{general_formula_for_capacity} gives us  
\begin{equation} 
C(\lambda)= \lambda \sup_{\bf{A}} \underline{I}(\bf{A} ; \bf{D}) \label{gencap}
\end{equation}
where ${\bf{D}}=\{D^n\}_{n \geq1}$ is the interdeparture process of the queue when interarrival process is ${\bf{A}}=\{A^n\}_{n \geq1}$. In the above expression $\underline{I}$ is the inf-information rate between {\bf{A}} and {\bf{D}}, that is the $\liminf$ in probability of the sequence of normalized information densities $\frac1n i(A^n ;D^n)$ where
$$i(A^n;D^n) \defeq \log \frac{\mathbb{P}(D^n| A^n)}{\mathbb{P}(D^n)}.$$
We similarly consider the  inf- and sup-entropy rate $\underline{H}$ and $\overline{H}$, respectively (see \cite{han1993approximation} for the precise definitions of $\underline{I}$, $\underline{H}$, and $\overline{H}$). From \cite[Theorem~$8$]{general_formula_for_capacity}
\begin{align} \label{ineqi}
\underline{H}({\bf D}) - \underline{H}({\bf D}|{\bf A})  \leq \underline{I}({\bf A};{\bf D}) \leq \overline{H}({\bf D}) - \underline{H}({\bf D}|{\bf A}).
\end{align}
Now suppose that $\bf{A}$ is i.i.d. according to a  mean ${1}/{\lambda}$ geometric distribution with $\lambda<\mu$. This implies that the queue is stable and therefore 
$$\frac1n \sum_{k=1}^n\mathbb{E}[D_k] \overset{n\to \infty}{\to} \frac{1}{\lambda}.$$
Hence by Lemma~\ref{maxh}
 $$ \limsup_{n\to \infty}\frac{1}{n}H(D^n)\leq H(g_\lambda).$$
Since by assumption 
 $$H(g_\lambda) \leq \liminf_{n\to \infty}\frac{1}{n}H(D^n),$$
 it follows that
  $$H(g_\lambda) = \lim_{n\to \infty}\frac{1}{n}H(D^n)=\overline{H}({\bf D})=\underline{H}({\bf D})$$
and {\bf D} is i.i.d. geometric.
Hence by \eqref{ineqi}
$$\underline{I}({\bf{A}}; {\bf{D}})= H(g_\lambda) - H(S), $$
and from \eqref{gencap} we conclude the direct part of the theorem
\begin{align}\label{direct}
C(\lambda)\geq \lambda (H(g_\lambda) - H(S)). 
\end{align}

For the converse part, a standard application of Fano's inequality gives 
\begin{equation} \label{shannon}
C(\lambda) \leq \lambda \liminf_{n \to \infty} \sup_{A^n} \frac1n I(A^n;D^n)
\end{equation}
where
\begin{align*}
I(A^n ; D^n) &= H(D^n) - nH(S).
\end{align*}
From the Theorem's assumption  it follows that $$\frac{1}{n}\sum_{i=1}^n{\mathbb{E}} D_i\leq \frac{1}{\lambda}.$$ Hence, from Lemma~\ref{maxh} and the fact that $H(g_\lambda)$ is non-increasing in $\lambda$ we get
$$\limsup_{n \to \infty} \frac{H(D^n)}{n} \leq H(g_{\lambda})$$
for any interarrival times $A^n$.
This together with \eqref{shannon} implies  the converse 
\begin{align} \label{converse}
C(\lambda) \leq \lambda (H(g_\lambda) - H(S)).
\end{align}
From \eqref{direct} we then conclude that
$$C(\lambda) = \lambda (H(g_\lambda) - H(S)).
$$ 

To conclude the proof of the theorem it suffices to show that 
$$ C_F(\lambda) \leq \lambda (H(g_\lambda) - H(S)).$$
From \cite{bits_through_queues} we have
 $$ C_F = \lambda \sup_{X \geq 0   \atop   \mathbb{E}[X] \leq \frac{1}{\lambda} - \frac{1}{\mu}} I(X ; X + S).$$
Now, 
\begin{align*}
I(X ; X+S) &= H(X+S) - H(X+S | X)\\
  &= H(X+S) - H(S)\\
  &\leq H(g_{\frac{1}{\mathbb{E}[X+S]}}) - H(S)\\
  &\leq H(g_\lambda) -H(S)
\end{align*}
where the second inequality holds since $H(g_\lambda)$ is non-decreasing in $1/\lambda$. It then follows that for any $\lambda $ we have $C(\lambda)=C_F(\lambda)$ and therfore $$C = \sup_{\lambda < \mu}C(\lambda)    = \sup_{\lambda < \mu}C_F(\lambda)=C_F.$$
\qedwhite


%
\section{Concluding remarks}\label{concrem}
In this paper we introduced a notion of weak feedback and provided a simple upper bound to the weak feedback capacity which, in certain cases, allows to distinguish the capacities with and without feedback. An interesting open question is whether this bound is actually equal to the weak feedback capacity. Insights into this question would shed light on necessary and sufficient conditions under which feedback increases capacity of timing channels.

\section*{Acknowledgement}
The authors would like to thank Venkat Anantharam for stimulating discussions which prompted the current investigation.

\section*{Appendix}

We provide a direct proof of Corollary~\ref{bound}. 
\begin{align*}
 C(\lambda) \leq \lambda \sup_{X:\mathbb{E}(W(X))\leq1/\lambda-1/\mu  \atop W(X) = (X-S_1)^+ } I(W(X),W(X)+S_2)
\end{align*}

\begin{proof}
Denote by $U$ the transmitted message and by $V$ the decoded message. Assuming that $U$ is equiprobable, Fano's inequality and the data processing inequality \cite{Cover_Thomas} we have that that every $(n,M,n/\lambda,\epsilon)$-code satisfies
\begin{align}\label{fanoss}
\log M &\leq\frac{1}{1-\epsilon}[I(U;V) + 1] \\
   & \leq\frac{1}{1-\epsilon}[I(A^n;D^n)+ 1].
\end{align}
In \cite{bits_through_queues} it is shown that 
\begin{align} I(A^n;D^n) \leq \sum_{i=1}^n I(W_i;W_i+S_i)\label{av}
\end{align}
where 
\begin{align}\label{vi}W_i =(a_i-d_{i-1})^+
\end{align}
is the waiting time of the queue between the departure of the $(i-1)$\/th packet and the beginning of the service of the $i$\/th packet.

Now suppose that $W_{i-1} \neq 0.$ Then $$d_{i-1}=a_{i-1}+S_{i-1}$$ 
and therefore
\begin{align*}
W_i &= a_i - d_{i-1}\\
  &=a_i - a_{i-1}-S_{i-1}\\
  &=A_i-S_{i-1}.
\end{align*}
It follows that if $V_{i-1} \neq 0$ then $W_i = (A_i-S_{i-1})^+$.

Similarly, if $W_{i-1} = 0$ but $W_{i-2} \neq 0$ 
\begin{align}\label{di}
d_{i-1} = a_{i-2} + S_{i-2} + S_{i-1}
\end{align}
which implies that
$$W_i = ((A_{i-1}+A_i-S_{i-2})-S_{i-1})^+.$$

Now let $$k\defeq\sup\{j<i:W_j \neq 0\}$$ with the convention that $k=1$ if $\{j<i:W_j \neq 0\}=\emptyset$.
Then from \eqref{vi} and by iterating \eqref{di} we get
\begin{align*}
W_i &= \left( \sum_{j=k+1}^i A_j - \sum_{j=k}^{i-1} S_j \right)^+\\
  &=  \left( \left[ \sum_{j=k+1}^i A_j - \sum_{j=k}^{i-2} S_j \right] - S_{i-1} \right)^+.
\end{align*}
Now define $$X_i\defeq  \sum_{j=k+1}^i A_j - \sum_{j=k}^{i-2} S_j.$$ From \eqref{av} we get
$$I(A^n;D^n) \leq  \sum_{i=1}^n I((X_i-S_{i-1})^+;(X_i-S_{i-1})^++S_i).$$
As in the proof of Theorem~\ref{wfc} consider the function
$$c(a) \defeq \sup_{X:\mathbb{E}[W(X)] \leq a} I((X-S_1)^+ ; (X-S_1)^++S_2)$$
where $W(X)=  (X-S_1)^+$, where $X$ is a random variable independent of $(S_1,S_2)$. Recalling that $c$ is non-decreasing and concave we get
$$I(U;V) \leq \sum_{i=1}^n c(\mathbb{E}[W(X_i)])$$ and therefore
\begin{align}\label{mutua}
\frac1n I(U;V) &\leq \frac1n \sum_{i=1}^n c(\mathbb{E}[W_i])\\
  &\overset{(a)}{\leq} c \left( \sum_{i=1}^n \frac1n \mathbb{E}[W_i] \right)\\
  &\overset{(b)}{\leq} c \left( \frac1\lambda - \frac1\mu \right)
\end{align}
where $(a)$ follows from the concavity of $c$, 
where $(b)$ is a consequence of the fact that we impose an output rate no greater than $\lambda$ and that $c$ is a non-decreasing.
From \eqref{fanoss} and \eqref{mutua} we get that the rate of any $(n,M,n/\lambda,\epsilon)$-code satisfies
\begin{align*}
\lambda \frac{\log M}{n} \leq \frac{\lambda}{1-\epsilon}\left(c\left(\frac{1}{\lambda}-\frac{1}{\mu} \right)+\frac{1}{n} \right)
\end{align*}
what implies the desired result.
\end{proof}

\bibliographystyle{plain}
\bibliography{biblio}

\end{document}